% EPTCS Style distribution v1.7.0 released May 23, 2022.
% https://github.com/EPTCS/style
\documentclass[submission,copyright,creativecommons]{eptcs}
 % Name of the event you are submitting to

\usepackage{iftex}

\ifpdf
  \usepackage{underscore}         % Only needed if you use pdflatex.
  \usepackage[T1]{fontenc}        % Recommended with pdflatex
\else
  \usepackage{breakurl}           % Not needed if you use pdflatex only.
\fi

%%%%%%%%%%%%%%%%%%%%%%%%%%%%%%%%%%%%%%%%%%%%%%%%%%%%%%%%%%%%%%%%%%%%%%%%
\usepackage{graphicx}
\usepackage{hyperref}
\usepackage{tikz}
\usepackage{amsthm}
\usepackage{amssymb,amsmath,mathtools,float,multicol}
\usepackage{nicefrac,xfrac}
\usepackage{xspace}
\usepackage{mathrsfs}
\usepackage{stmaryrd} %sslash 	
%\usepackage{mathabx} %sqdoublecap/cup 
% Since mathabx is incompatible with another packaged used i must declare sqdoublecap and sqdoublecup now
\DeclareFontFamily{U}{mathb}{}
\DeclareFontShape{U}{mathb}{m}{n}{
  <-5.5> mathb5
  <5.5-6.5> mathb6
  <6.5-7.5> mathb7
  <7.5-8.5> mathb8
  <8.5-9.5> mathb9
  <9.5-11> mathb10
  <11-> mathb12
}{}
\DeclareSymbolFont{mathb}{U}{mathb}{m}{n}
\DeclareFontSubstitution{U}{mathb}{m}{n}
\DeclareMathSymbol{\sqdoublecap}{2}{mathb}{"5E}
\DeclareMathSymbol{\sqdoublecup}{2}{mathb}{"5F}
%%%%%%%%%%%%%%%%%%%%%%%%%%%%%%%%%%%%%%%%%%%%%%%%%%%%%%%%%%%%%%%%%%%%%%%
\usetikzlibrary{positioning}
\setcounter{tocdepth}{3}
\usetikzlibrary{fit}
\DeclareMathAlphabet{\mathcal}{OMS}{cmsy}{m}{n} %keep the mathcal format
\def \justif#1{\scriptstyle{\{\text{#1}\}} \\}
\def\twist#1{\mathcal{#1}}
\def\A{\mathcal{A}}
\def\T{\mathcal{T}}
\def\K{\mathcal{K}}

\DeclareMathOperator*{\bigsqdoublecup}{\scalebox{1.5}{$\sqdoublecup$}}
\newcommand{\Rel}{\mathbf{Rel_P}}
\newcommand{\Set}{\mathbf{Set_P}}
\newcommand{\model}{(W,R,V)}

\newcommand{\doublef}{f\mspace{-7mu}f}
\newcommand{\doublet}{t\mspace{-3mu}t}
\tikzset{
block/.style = {draw=white,rectangle,text width = 3em,align=left, rounded corners=0ex, minimum height=2.0em}
}
\definecolor{red1}{rgb}{139, 0.0, 0.0} 
\definecolor{blue1}{rgb}{0.0, 0.0, 139}
\definecolor{magenta1}{rgb}{0.55, 0.0, 0.55}

\newtheorem{theorem}{Theorem}
\newtheorem{definition}{Definition}
\newtheorem{lemma}{Lemma}
\newtheorem{example}{Example}
\newtheorem*{remark}{Remark}
\allowdisplaybreaks
%%%%%%%%%%%%%%%%%%%%%%%%%%%%%%%%%%%%%%%%%%%%%%%%%%%%%%%%%%%%%%%%%%%%%%%%

\title{Paraconsistent Relations as a Variant of Kleene Algebras \thanks{This work was financed by PRR - Plano de Recuperação e Resiliência under the Next Generation EU from the European Union within Project Agenda ILLIANCE C644919832-00000035 - Project n 46, as well as by National Funds through FCT - Fundação para a Ciência e a Tecnologia, I.P. (Portuguese Foundation for Science and Technology) within the project IBEX, with reference PTDC/CCI-COM/4280/2021 (DOI 10.54499/PTDC/CCI-COM/4280/2021)}}
\author{
Juliana Cunha 
\institute{CIDMA, Dep. Mathematics, Aveiro University \\ Aveiro, Portugal}
\institute{INESC TEC \& Dep. Informatics, Minho University \\ Braga, Portugal }
\email{juliana.cunha@ua.pt}
\and
Alexandre Madeira
\institute{CIDMA, Dep. Mathematics, Aveiro University \\ Aveiro, Portugal}
\email{\quad madeira@ua.pt}
\and
Luís S. Barbosa
\institute{INESC TEC \& Dep. Informatics, Minho University \\ Braga, Portugal }
\email{lsb@di.uminho.pt}
}

\begin{document}
\maketitle

\begin{abstract}
Kleene algebras (\texttt{KA}) and Kleene algebras with tests (\texttt{KAT}) provide an algebraic framework to capture the behavior of conventional programming constructs.
%     are algebraic structures widely applied in various computer science domains, serving as a formal model for program behavior. Among \texttt{KA} variants, Kleene algebras with tests (\texttt{KAT}), introduced by Dexter Kozen, incorporate a Boolean subalgebra to formalize tests. This extension allows the modeling of conventional programming constructs such as conditionals and \textbf{while} loops, along with handling assertions.
This paper explores a broader understanding of these structures, in order to enable the expression of programs and tests  yielding vague or inconsistent outcomes. Within this context, we introduce the concept of a paraconsistent Kleene Algebra with tests (\texttt{PKAT}), capable of capturing vague and contradictory computations. Finally, to establish the semantics of such a structure, we introduce two algebras, $\Set(\T)$ and $\Rel(\K,\T)$, parametric on a class of twisted structures $\K$ and $\T$. We believe this sort of structures, for their huge flexibility, have an interesting application potential.
\end{abstract}

\section{Introduction}
In his seminal work \cite{Kle56}, Stephen Kleene described finite deterministic automata along with a specification language: regular expressions. Kleene's paper left open the question of whether a finite, sound, and complete axiomatization of the equivalence of regular expressions existed, which would provide an algebraic framework for describing regular languages.
This question has been explored by many researchers. In 1964, Redko \cite{Red64} proved that no finite set of equational axioms could fully characterize the algebra of regular expressions, and in 1966, Salomaa \cite{Sal96} provided two complete axiomatizations of this algebra. Conway's comprehensive 1971 work \cite{Con71} presented a detailed overview of results related to regular expressions and their axiomatizations. In 
\cite{Kozen94} Kozen showed that Salomaa’s axiomatization is non-algebraic,i.e., unsound under substitution of alphabet symbols by arbitrary regular expressions. He then presented an algebraic axiomatization: Kleene algebras (\texttt{KA}).
%, demonstrating that each regular expression denotes a language recognizable by a deterministic automaton, and vice versa.
%Perhaps Kleene's most fundamental result, now known as Kleene’s theorem. 
Later in \cite{Kozen97}, Kozen also introduced a variant of \texttt{KA} called Kleene Algebras with Tests (\texttt{KAT}). The addition of tests to \texttt{KA} was prompted by the need to express conventional constructs such as conditionals and \textbf{while} loops. As a result, \texttt{KAT} is specifically designed for equational reasoning about these constructs. For any proposition $\alpha$ it is possible to form a test $\alpha?$ that acts as a guard: it succeeds with no side effects in states satisfying $\alpha$, and fails or aborts otherwise\footnote{Notice that throughout this paper,  distinct symbols will be used for programs and propositions. Therefore, we often omit the $?$ symbol and simply write $\alpha$ to denote a test.}.

While historically prominent in automata theory and formal languages \cite{SalomaaKuich}, \texttt{KA} and their variants have found applications across various domains, including relational algebra \cite{Tarski41}, program semantics and logics \cite{Pratt1990}, compiler optimization \cite{KP2000}, and algorithm design and analysis \cite{kozen1992design}. For a more comprehensive read of its applications, the reader is referred  to \cite{backhouse}.

 \texttt{KAT} are suitable to reason about imperative programs since these programs can be thought of as  sequences of discrete steps, each related to an atomic transition in a standard automaton. Traditionally, these programs operate within a bivalent truth space, where assertions have Boolean outcomes. While this framework has proven to suit a huge range of computer science applications, its inherent simplicity and rigidity may fall short in capturing some intricacies present in real-world scenarios.

Actually, it is not uncommon in software engineering to face scenarios in which vague, or weakly consistent, or even contradictory information is present. Often such characteristics cannot be abstracted away or swept under the carpet. A typical example the authors are currently facing, and which forms one of our motivations for this work, concerns repositories of medical images in a particular domain, which are labeled by several medical judgments from different expert teams, which often are in partial contradiction. Another example arises in the analysis of implementations of quantum circuits in current NISQ (Noisy Intermediate-Scale Quantum) technology \cite{Preskill18}, where conflicting decoherence levels in the quantum memory have to be taken into account.

Therefore, the introduction of algebraic structures to model computations becomes necessary when the behavior of these computations does not conform to a simple bivalent outcome. Instead, it may involve weighted outcomes from a richer domain, potentially lacking consistency. In this context, vagueness captures the lack of information, while paraconsistency, a well-established designation in logic, expresses the excess of information arising from contradictory judgments.

From a technical point of view, paraconsistency refers to a property inherent to a consequence relation. A logic is said to be ``paraconsistent'' if and only if its logical consequence relation, whether semantic or proof-theoretic, does not lead to explosion \cite{Priest07}. In logic, the term ``explosion'' refers to the principle of \emph{ex falso quodlibet}, meaning ``from contradiction, anything follows''. This principle is the basis for the law of non-contradiction in classical logic. It asserts that from contradictory premises, any proposition can be derived, thus leading to triviality (where anything can be proved true). Consequently, paraconsistent logics set themselves apart from classical ones by their capacity to handle inconsistent information without ``exploding'' into absurdity. Initially developed in Latin America during the 1950s and 1960s, notably through the influential contributions of F. Asenjo and Newton da Costa, paraconsistent logic quickly garnered attention within the logic and computer science communities. Its original focus on mathematical applications has since expanded, as evidenced by recent literature emphasizing the engineering potential of paraconsistency \cite{Aka16}. Relevant applications are documented in several fields,  including deontic logic \cite{Costa1986}, data network monitoring \cite{Cortes22}, robotics \cite{emmyiii}, quantum mechanics \cite{CG00} and quantum information theory \cite{AC10},

Since their introduction numerous formalizations of paraconsistent logic have emerged \cite{DaCosta1977,Dunn2002}. This paper takes a specific standpoint: the notion of paraconsistent transition systems (PLTS) introduced in \cite{lsfa22} and later used to reason about decoherence in quantum circuits as documented in \cite{BarbosaM23}. 
Informally, in these systems, each transition is assigned a pair of weights: a \emph{positive} weight representing the evidence supporting the transition's occurrence, and a \emph{negative} weight representing the evidence against it.

This paper aims at developing an algebraic counterpart  to our research on PLTS, already documented in a number of references  \cite{BarbosaM23,NCL22,lsfa22,TASE23,FSEN23}. As in previous works, we adopt a similar approach to that in \cite{Bou}, focusing on a particular class of residuated lattices over a set $A$ of possible truth values. In this setting, both the positive and negative weights are taken from the set $A$.

Furthermore, to establish the groundwork for the sequel, we work with the concept of a \emph{twisted-structure}, originally proposed by Kalman \cite{Kalman1958}. This structure arises from the direct product of a lattice $\mathbf{L}$ with its order-dual $\mathbf{L}^\partial$ and serves as a key tool for jointly computing positive and negative weights in PLTS, represented by pairs in $L \times L^\partial$. Additionally, the twisted-structure naturally carries a De Morgan involution, which we denote by $\sslash$ and interpret as a form of ``negation''.

The main contribution of this paper lies in the proposal of  an extension of \texttt{KAT} to a paraconsistent framework  able to reason about uncertain or inconsistent computations. This allows computations to yield outcomes graded by two weights: one indicating  evidence for execution and the other evidence for failure. Note that a similar motivation can be found in \cite{GMB19} where a graded variant of \texttt{KAT} is introduced. That work, however, only captures forms of uncertainty as usual in fuzzy logic. Finally, we present two examples of PKAT, paraconsistent sets and relations, which are parametric on arbitrary twisted structures resulting from the direct product of the relevant lattices. These examples serve to illustrate how PKAT handles computations with a paraconsistent reasoning.

\medskip

\textbf{Paper structure.} Subsection~\ref{subsec:prel} recalls the definition of \texttt{KA} and \texttt{KAT}. Section~\ref{sec:PLTS} revisits the definition of PLTS parametric on a class of residuated lattices \cite{lsfa22} and establishes some new properties that will prove useful in the sequel. Section~\ref{sec:Katp} introduces a variant of \texttt{KAT} for a paraconsistent context (\texttt{PKAT}) where programs and tests accommodate inconsistencies and vagueness. Additionally, in Section~\ref{sec:Katp} we present the details of two new algebraic structures that form a \texttt{PKAT}: paraconsistent sets $\Set(\T)$ and paraconsistent relations $\Rel(\K,\T)$, which are parametric over fixed twisted structures $\K$ and $\T$, respectively. Finally, Section~\ref{sec:conclusion} concludes and points out a number of topics for future research.

\subsection{Preliminaries}\label{subsec:prel}

\begin{definition} \label{kleenealg} \cite{Kozen97}
A Kleene algebra \texttt{(KA)} is an algebraic structure 
$(K,+,\cdot,^\star,0,1)$ 
satisfying the axioms~\eqref{1}-\eqref{15} below. The order of precedence of the operators is $\,^\star \, >\, \cdot \,>\, +$. Thus, $p+q\cdot r^\star $ should be parsed as $p+(q\cdot(r^\star))$.
\vspace{-0.5cm}
\begin{center}\begin{minipage}{0.5\linewidth}
    \begin{gather}
        p+(q+r) \,=\, (p+q)+r  \label{1} \\
         p+q\,=\, q+p \label{2}\\
         p+0\,=\,p \label{3}\\
         p+p\,=\, p \label{4} \\
         p\cdot (q \cdot r)\,=\,(p \cdot q) \cdot r  \label{5}\\
         1\cdot p \,=\,  p \cdot 1\,=\, p \label{6}\\
         p\cdot (q+r)\,=\, p\cdot q+p\cdot r \label{7}
    \end{gather}
\end{minipage}
\begin{minipage}{0.455\linewidth}
    \begin{gather}
        (p+q)\cdot r\,=\, p\cdot r +q \cdot r \label{8}\\
         0 \cdot p\,=\,p\cdot 0\,=\,0\label{9} \\
         1+p \cdot p^\star \,=\, p^\star \label{10} \\
         1+ p^\star \cdot p \,=\, p^\star \label{11} \\
         p\cdot r \,\leq \, r  \, \longrightarrow  \, p^\star \cdot r \,\leq\, r  \label{14}\\
         r \cdot p \,\leq\, r  \, \longrightarrow  \, r \cdot p^\star \,\leq \, r \label{15}
    \end{gather}
\end{minipage}
\end{center}
where $\leq$ refers to the natural partial order on $K$, that is, $p \leq q$ if and only if  $p+q= q$.
\end{definition}
Axioms \eqref{1}-\eqref{9} establish $(K,+,\cdot,^\star,0,1)$ as an idempotent semiring, while axioms \eqref{10}-\eqref{15} say that $^\star$ is like the reflexive transitive closure on binary relations \cite{10.5555/557365}.
%The term \emph{idempotent} refers to axiom~\eqref{4} which implies that $\leq$ is a partial order. 

\begin{definition}\cite{10.5555/557365}
    A \emph{Kleene algebra with tests} \texttt{(KAT)} is a two-sorted algebra
    $$(K,B,+,\cdot, ^\star,^-,0,1)$$
    such that $(K,+,\cdot,\,^\star, 0,1)$ is a Kleene algebra, $(B,+,\cdot,\,^-,0,1)$ is a Boolean algebra and $B \subseteq K$.
    %\begin{enumerate}
        %\item $(K,+,\cdot,\,^\star, 0,1)$ is a Kleene algebra
        %\item $(B,+,\cdot,\,^-,0,1)$ is a Boolean algebra
        %\item $B \subseteq K$ 
    %\end{enumerate}
\end{definition} 
The unary operator $\,^-$ is defined only on $B$ which elements are called tests. We reserve the letters $p,\, q,\, r,\, s $ for arbitrary elements of $K$ and $\alpha,\,\beta,\,\gamma$ for tests. In summary, a \texttt{KAT} satisfies axioms \eqref{1}-\eqref{15} and the following for any tests:
\vspace{-0.5cm}
\begin{center}\begin{minipage}{0.5\linewidth}
    \begin{gather}
    \alpha + (\beta \cdot \gamma)\,=\, (\alpha+\beta) \cdot (\alpha+\gamma)   \label{213} \\
    (\alpha \cdot \beta) + \gamma\,=\, (\alpha+\gamma) \cdot (\beta+\gamma) \label{215}\\ 
    \alpha \cdot \beta \,=\, \beta \cdot \alpha  \label{214} \\
    %\overline{a  b}\,=\, \overline{a} +\overline{b} \label{215}\\
    \alpha \cdot \alpha \,=\, \alpha \label{216} 
    \end{gather}
\end{minipage}
\begin{minipage}{0.455\linewidth}
    \begin{gather}
        \overline{\overline{\alpha}}\,=\, \alpha \label{217} \\
         \alpha+1\,=\, 1 \label{218} \\
        %\overline{a+b}\, = \, \overline{a}\,\overline{b} \label{213} \\
         \alpha \cdot \overline{\alpha}\,=\,0  \label{219}\\
         \alpha + \overline{\alpha}\,=\,1 \label{220}
    \end{gather}
\end{minipage}
\end{center}
Axioms~\eqref{1}-\eqref{15} pertain to the fact that $(K,+,\cdot,^\star, 0,1)$ is a \texttt{KA}. While axioms~\eqref{213}-\eqref{220} pertain to the fact that  $(B,+,\cdot,^-,0,1)$ is a Boolean algebra \cite{10.5555/557365}. 
%For instance, in Example~\ref{ex:bin_rel}, a test $\alpha$ is interpreted as the subset of the identity relation, comprising all pairs $(u, u)$ such that $\alpha$ holds in state $u$. 
% For more results regarding \texttt{KAT} properties we refer to \cite[Section 2.3.]{Kozen97}.

\begin{example}\label{ex:binrel}
    Semantically, programs are represented as binary relations over a set of states $X$ and a test $\alpha$ is interpreted as a subset of the identity relation, comprising all pairs $(x,x)$ such that $\alpha$ holds at state $x$. Hence,the family of binary relations on a set $X$ is a \texttt{KAT} with operations defined as follows.
    \vspace{-0.3cm}
    \begin{center}\begin{minipage}{0.35\linewidth}
        \begin{align*}
            &0 \coloneqq \emptyset   \\
            &1 \coloneqq \{(u,u)\,|\,u \in X\}  \\
            &\overline{\alpha} \,= \, 1 \setminus \alpha
        \end{align*}
    \end{minipage}
    \begin{minipage}{0.5\linewidth}
        \begin{align*}
             &  R + R' \coloneqq R \cup R'  \\
             &R \cdot R' \coloneqq \{(u,w)\,|\, \exists v  (u,v)\in R \land  (v,w) \in R'\} \\
             &R^\star \coloneqq \underset{n \geq 0}{\bigcup}R^n\,=\, \text{reflexive transitive closure of $R$} 
        \end{align*}
    \end{minipage}
    \end{center}
where $R^0\coloneqq \{(u,u)\,|\,u \in X\}$ and $R^{n+1}\,=\,  R \cdot R^n$.
\end{example}

This extension of \texttt{KA} with a Boolean algebra results in an algebraic model to capture program behavior and assertions. Hence, conditionals and \textbf{while} loops found in programming can be defined in terms of the regular operators.
\begin{gather*}
    \textbf{if $\alpha$ then $p$ else $q$} \;\overset{\text{def}}{=}\; \alpha \cdot p + \overline{\alpha} \cdot q \\
    \textbf{while $\alpha$ do $p$}\;\overset{\text{def}}{=}\; (\alpha\cdot p)^\star\,     \cdot \, \overline{\alpha}
\end{gather*}

\section{Paraconsistent transition systems}\label{sec:PLTS}
The notion of \emph{paraconsistent transition systems}, abbreviated to PLTS, was introduced in \cite{lsfa22}. These systems' transitions involve two weights: a \emph{positive} and a \emph{negative} that characterize each transition in opposite ways. The positive weight represents the evidence of its presence, while the negative weight represents the evidence of its absence.
Furthermore, following the line of research outlined in \cite{Bou}, a residuated lattice over a set $A$ of possible truth values is adopted. This allows the transitions of PLTS to be represented by pairs of weights $(\doublet, \doublef) \in A \times A$. Thus, all the relevant constructions of PLTS are parametric in a class of residuated lattices and admit different instances according to the truth values domain $A$ that better suits each concrete problem.

To exemplify, suppose that weights for both transitions come from a residuated lattice over the real interval $[0,1]$.
 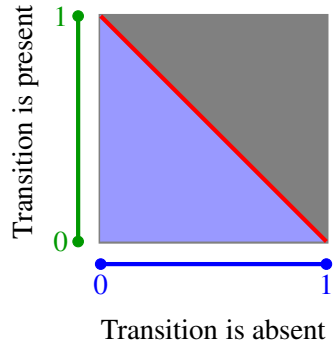
\begin{figure}[H]\begin{center}
\begin{tikzpicture}
\draw[gray, ultra thick] (0,0) rectangle (3,3);
\draw[black!40!green, ultra thick] (-0.3,0) -- (-0.3,3) ;
\draw[blue, ultra thick] (0,-0.3) -- (3, -0.3);
\draw[] (-1,3.3) node[anchor=east, rotate=90] {Transition is present};
\draw[] (1.5,-0.9) node[anchor=north] {Transition is absent};
\filldraw[blue] (0,-.3) circle (2pt) node[anchor=north] {$0$};
\filldraw[blue] (3,-0.3) circle (2pt) node[anchor=north] {$1$};
\filldraw[black!40!green] (-0.3,0) circle (2pt) node[anchor=east] {$0$};
\filldraw[black!40!green] (-0.3,3) circle (2pt) node[anchor=east] {$1$};
\filldraw[gray] (3,0) -- (3,3) -- (0,3);
\filldraw[white!60!blue] (0,0) -- (0,3) -- (3,0);
\draw[red, ultra thick] (0,3) -- (3,0);
\end{tikzpicture}\end{center}
\caption{The vagueness-inconsistency square}\label{fig:square}
\end{figure}
\noindent These pairs of weights express different behaviors:
\begin{itemize}
\item \emph{inconsistency}, when the positive and negative weights  are contradictory,i.e., they sum to a value greater than $1$, this corresponds to the upper triangle in Figure~\ref{fig:square}, filled in grey.
\item \emph{vagueness}, when the sum is less than $1$,  corresponding to the lower, periwinkle  triangle in Figure~\ref{fig:square}.
\item \emph{consistency}, when the sum is exactly $1$, that is the evidence degrees that enforce and prevent a transition from occurring are complementary, corresponding to the red line in Figure~\ref{fig:square}.
\end{itemize}

We will consider a class of residuated lattices $\pmb{A} = \langle A, \sqcap, \sqcup, 1, 0, \rightarrow \rangle$ over a set $A$, bounded by a maximal element $1$ and a minimal element $0$ and where the lattice meet ($\sqcap$) and the monoidal composition ($\odot$) coincide. Such lattices are commonly referred to as \emph{Heyting algebras} in the literature, and also known as pseudo-Boolean algebras \cite{Rasiowa63}. The adjunction property is stated as $a \sqcap b \leq c \; \text{ if and only if } \; b \leq a \to c$. Finally, we will require that the Heyting algebras in the sequel be \emph{complete}, i.e., every subset of $A$ has both a greatest lower bound and a least upper bound.

\begin{example} The following lattices are complete Heyting algebras:
    \begin{itemize}
    \item the Boolean algebra $\pmb{2}= \langle \{ 0,1\},\land,\lor,1, 0,\rightarrow \rangle$
    \item the \L{}ukasiewicz three-valued algebra $\pmb{3}= \langle \{ \top, u, \bot \},\land_3,\lor_3,\,\top ,\bot,\rightarrow_3 \rangle$, where 
    \begin{center}
    \begin{tabular}{c|ccc}
    $\;\land_{3}\;$ & $\,\bot\,$ & $\,u$ & $\,\top\,$\tabularnewline\hline $\bot$ & $\bot$ &$\bot$ & $\bot$\tabularnewline$u$ & $\bot$ & $u$ & $u$\tabularnewline$\top$ & $\bot$ & $u$ & $\top$\tabularnewline\end{tabular}
    \quad
    %\subfloat{
    \begin{tabular}{c|ccc}
    $\;\lor_{3}\;$ & $\,\bot\,$ & $\,u\,$ & $\top$\tabularnewline\hline$\bot$ & $\bot$ & $u$ & $\top$\tabularnewline$u$&$u$ & $u$ & $\top$\tabularnewline$\top$ & $\top$ &$\top$ & $\top$\tabularnewline
    \end{tabular}
    \quad
    \begin{tabular}{c|ccc}
    $\;\rightarrow_{3}\;$ & $\,\bot\,$ & $\,u\,$ & $\,\top\,$\tabularnewline\hline 
    $\bot$ & $\top$ & $\top$ & $\top$\tabularnewline
    $u$ & $u$ & $\top$ & $\top$\tabularnewline$\top$ & $\bot$ & $u$ & $\top$\tabularnewline\end{tabular}
    \end{center}
    The truth value $u$ stands for ``unknown'' and assumes different notations in the literature, such as $ \nicefrac{1}{2}$ or $\#$, which are interpreted as ``possibility'' or ``indeterminacy'' \cite{Priest07}.
    \item $\pmb{G}= \langle [0,1],\min,\max,0, 1,\rightarrow \rangle$, with $a \rightarrow b\,=\,1$, if $a\leq b$ and $a \rightarrow b\,=\,b$ otherwise.
    %$$a\rightarrow b = \begin{cases}1 & if a\leq b \\b & otherwise\end{cases}$$
    \end{itemize}
\end{example}

The following lemma combines \cite[Lemma~1]{Bisparaconsistent} and delineates several essential properties of complete residuated lattices, which we will resort to prove some results in this paper. For detailed proofs of Properties~\eqref{dist-and}-\eqref{sqcap-monotone} and Properties~\eqref{prop1-13}-\eqref{prop1-14}, readers are referred to \cite{Bisparaconsistent} and \cite{Bou}, respectively. %As mentioned in \cite{GMB19} operators $\sqcap$ and $\sqcup$ are associative and as a consequence it is possible to generalise it to $n$-ary operators and use the notation $\bigsqcup$ to represent the iterated version.

\begin{lemma} Let $\pmb{A}$ be an complete Heyting algebra over a non empty set $A$. The following properties hold, for any $a,\, a', \, b, \, b' \in A$
\begin{gather}
    a \sqcap (b \sqcup b') \, =\, (a \sqcap b) \sqcup (a \sqcap b')  \label{dist-and}\\
    a \sqcup (b \sqcap b') \, =\, (a \sqcup b) \sqcap (a \sqcup b')  \label{dist-or}\\
    a \leq a' \text{ and } b \leq b'  \text{ implies } a \sqcap b \leq a'\sqcap b' \label{sqcap-monotone}\\
    a \sqcap \bigg( \, \underset{i \in I}{\bigsqcup} \, b_i \, \bigg) \,=\, \underset{i \in I}{\bigsqcup} \, (a \sqcap b_i) \label{prop1-13} \\
    \bigg( \, \underset{i \in I}{\bigsqcup} \, a_i \, \bigg) \sqcap b \,=\, \underset{i \in I}{\bigsqcup} \, (a_i  \sqcap b) \label{prop1-14}
\end{gather}
where $I$ is a (possibly infinite) index set.
\end{lemma}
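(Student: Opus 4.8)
The plan is to establish the five properties by exploiting the adjunction and completeness of the Heyting algebra, deriving each distributivity and monotonicity law from the underlying lattice-theoretic structure. Since the statement collects results already proved in the cited references, my strategy is to give self-contained derivations that rely only on the defining adjunction $a \sqcap b \leq c \iff b \leq a \to c$ and the existence of arbitrary meets and joins.

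First I would prove the infinitary distributivity law \eqref{prop1-13}, namely $a \sqcap \bigsqcup_{i} b_i = \bigsqcup_{i} (a \sqcap b_i)$, since the remaining properties follow from it or from its finitary shadow. The inequality $\bigsqcup_{i}(a \sqcap b_i) \leq a \sqcap \bigsqcup_{i} b_i$ is immediate by monotonicity: each $a \sqcap b_i \leq a$ and $a \sqcap b_i \leq b_i \leq \bigsqcup_i b_i$, so the join on the left is a lower bound witness. For the reverse inequality I would use the adjunction: to show $a \sqcap \bigsqcup_i b_i \leq \bigsqcup_i (a \sqcap b_i)$, it suffices by adjointness to show $\bigsqcup_i b_i \leq a \to \bigsqcup_i (a \sqcap b_i)$, and since the join is the least upper bound this reduces to showing $b_i \leq a \to \bigsqcup_i (a \sqcap b_i)$ for every $i$, which by adjointness again is just $a \sqcap b_i \leq \bigsqcup_i (a \sqcap b_i)$, true by definition of the join. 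Property \eqref{prop1-14} follows symmetrically, using that $\sqcap$ is commutative in a Heyting algebra.

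The finite distributivity \eqref{dist-and} is the special case of \eqref{prop1-13} with index set $I = \{1,2\}$ and $b_1 = b$, $b_2 = b'$, so it requires no separate argument beyond specialization. For the dual distributive law \eqref{dist-or}, namely $a \sqcup (b \sqcap b') = (a \sqcup b) \sqcap (a \sqcup b')$, I would expand the right-hand side using \eqref{dist-and} twice: distribute to get $(a \sqcup b) \sqcap a$ joined appropriately, then use absorption $a \sqcup (a \sqcap x) = a$ and commutativity to collapse the cross terms, leaving $a \sqcup (b \sqcap b')$. Finally, monotonicity \eqref{sqcap-monotone} is the most elementary: from $a \leq a'$ and $b \leq b'$ I would conclude $a \sqcap b \leq a' \sqcap b$ and $a' \sqcap b \leq a' \sqcap b'$ by the universal property of the meet, then chain the two inequalities by transitivity.

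The main obstacle I anticipate is the reverse inclusion in the infinitary law \eqref{prop1-13}: unlike the finite case, one cannot argue by brute-force case analysis, and the argument genuinely requires the adjunction to transport the join past the meet. The completeness hypothesis is essential here precisely to guarantee that $\bigsqcup_i (a \sqcap b_i)$ exists so that the adjoint $a \to \bigsqcup_i (a \sqcap b_i)$ is well-defined; without it the chain of reductions would not even be expressible. Everything else is routine lattice manipulation, so I would present \eqref{prop1-13} in full and treat \eqref{dist-and}, \eqref{prop1-14}, \eqref{dist-or}, and \eqref{sqcap-monotone} as corollaries or short specializations, deferring to \cite{Bisparaconsistent} and \cite{Bou} for any detail I wish to compress.
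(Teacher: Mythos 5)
Your proof is correct, but it is worth noting that the paper does not actually prove this lemma at all: the lemma is presented as a collection of imported facts, with Properties \eqref{dist-and}--\eqref{sqcap-monotone} deferred to \cite{Bisparaconsistent} and Properties \eqref{prop1-13}--\eqref{prop1-14} to \cite{Bou}. So your self-contained derivation is genuinely different in kind, and every step of it checks out. Your decomposition is the standard frame-theoretic one: the inclusion $\bigsqcup_{i}(a \sqcap b_i) \leq a \sqcap \bigsqcup_{i} b_i$ holds because each $a \sqcap b_i$ lies below both $a$ and $\bigsqcup_{i} b_i$; the reverse inclusion follows from two applications of the adjunction ($a \sqcap b \leq c$ if and only if $b \leq a \to c$), which reduces the claim to the triviality $a \sqcap b_i \leq \bigsqcup_{i}(a \sqcap b_i)$; \eqref{dist-and} is the two-element instance of \eqref{prop1-13}; \eqref{prop1-14} follows by commutativity of $\sqcap$; \eqref{dist-or} follows from \eqref{dist-and} together with the two absorption laws (the equivalence of the two finite distributive laws is valid in any lattice, so no Heyting structure is needed there); and \eqref{sqcap-monotone} is just the universal property of the meet. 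What your version buys is self-containedness and a precise record of where the hypotheses enter: completeness is needed so that the joins, and hence the adjoints $a \to \bigsqcup_{i}(a \sqcap b_i)$, exist, and it is the adjunction rather than bare lattice structure that forces the infinitary distributive law --- which is exactly the property the paper silently relies on later when $\sqdoublecap$ is distributed over the infinite joins produced by $^\star$ in the proofs of Theorems~\ref{twist-kleene} and~\ref{theo:relp-Katp}. What the paper's approach buys is only brevity and an explicit pointer to the literature; incorporating an argument like yours would remove the dependence on external references for these facts.
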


Before providing a formal definition of PLTS, it is important to define the notion of a twisted structure. Initially introduced in Kracht's seminal work \cite{Kracht1998-KRAOEO}, a twisted structure results from the construction of the direct product of a residuated lattice $\pmb{L}$ and its order-dual $\pmb{L}^\partial$. This resulting lattice naturally possesses an involution\footnote{Informally speaking, the operator $\sslash$ will denote a involutive negation in this paper and in essence it interchanges the positive and negative weight within a pair. See Example~\ref{ex1:PLTS} for a concrete example.} given by 
$$\sslash(a,a')\,=\, (a',a)$$
for all $(a,\,a') \in \pmb{L} \times \pmb{L}^\partial$.
Since its introduction, numerous authors have explored extensions of this structure by imposing additional properties on the residuated lattice \cite{BusanicheCR14,BusanicheGM22,Castiglioni08}. Twisted structures play a fundamental role in the context of PLTS, as documented in prior work \cite{NCL22,lsfa22,TASE23,FSEN23}, by enabling the computation of pairs of truth weights. %In this paper, we build upon these previous definitions by introducing the operator $\,^\star$ which functions similarly to the reflexive transitive closure operator in relational algebra.

\begin{definition}\label{def:twisted-alg} Let \pmb{A}$=\langle A, \sqcap, \sqcup,1,0,\rightarrow \rangle$ be a complete Heyting algebra.
Its corresponding $\pmb{A}$-twisted structure is denoted by $\A=\langle \, A \times A, \sqdoublecup , \sqdoublecap ,\sslash,(0,1),(1,0) \, \rangle $ and defined for any pair in $A \times A$ as:
\begin{gather*}
    \sslash (a,b)\,=\, (b,a) \\
    (a,a') \sqdoublecap (b,b') = (a \sqcap b, a' \sqcup b')\\
    (a,a') \sqdoublecup (b,b') = (a \sqcup b, a' \sqcap b')
    %(a,a')^\star\,=\, \underset{n\geq 0}{\bigsqdoublecup} (a,a')^n
\end{gather*}
%with $(a,a')^0 \,=\, (1,0)$ and $(a,a')^{n+1}\,=\, (a,a') \sqdoublecap (a,a')^n$. %We use the notation $\bigsqdoublecup$ to represent $\sqdoublecup$ iterated version. 
The order in $\pmb{A}$ is lifted to $\A$ as $(a,a') \preccurlyeq (b,b') \text{ iff } a \leq b \text{ and } a' \geq b'$.
\end{definition}
As before, 
%the operator precedence is established as $\,^\star\,>\, \sqdoublecap \, > \, \sqdoublecup$ and 
parentheses will be frequently omitted, reserving their use for enhancing the readability and clarity of certain expressions.

In the sequel, we will frequently refer to the next Lemma, which presents key properties of the operators defined above.  Although most of these properties are evident, since the product of two complete, universally distributive lattices is itself a complete, universally distributive lattice, we explicitly state them for ease of reference in later sections.  Lemma~\ref{lem:twistprop} essentially states that the twisted-structure in Definition~\ref{def:twisted-alg} has the structure of a De Morgan algebra —i.e., a bounded distributive lattice in which $\sslash$ is involutive and satisfies De Morgan’s laws (c.f. \cite{Kalman1958}). Additionally, it also states that the lattice $\langle A, \times A, \sqdoublecap, \sqdoublecup \rangle$ is a quantale, that is, a complete lattice equipped with an associative operation $\sqdoublecap$ that satisfies distributive properties. This aligns with the inspiration for twisted structures drawn from Chu's work in category theory and its application to quantales \cite{Tsinakis2006}.

\begin{lemma}\label{lem:twistprop}
    Let $\A$ be a twisted structure, as defined in Definition~\ref{def:twisted-alg}. The following properties hold for any $(a,a'),\, (b,b'),\,(c,c') \in A \times A$.
    \vspace{-0.5cm}
    \begin{gather}
         \sslash  \sslash (a,a') \,=\, (a,a') \label{prop2-16}\\
         (a,a') \sqdoublecup (1,0) \,=\,(1,0) \label{prop2-17} \\
         (a,a') \sqdoublecup (0,1)\,=\, (a,a') \label{prop2-3} \\
         (a,a') \sqdoublecup (a,a') \,= \,(a ,a') \label{prop2-4} \\
         (a,a') \sqdoublecap (a,a')\,=\,(a,a') \label{prop2-15}\\
         (a,a') \sqdoublecup (b,b')\,=\, (b,b') \sqdoublecup (a,a') \label{prop2-2} \\
         (a,a') \sqdoublecap (b,b') \,=\, (b,b') \sqdoublecap (a,a') \label{prop2-11}\\
         \sslash ((a,a') \sqdoublecup (b,b')) \,=\, \sslash(a,a') \sqdoublecap \sslash(b,b') \label{prop2-10}\\
         \sslash ((a,a') \sqdoublecap (b,b')) \,=\, \sslash(a,a') \sqdoublecup \sslash(b,b') \label{prop2-12} \\
         (a,a') \sqdoublecap (1,0) \,=\, (1,0) \sqdoublecap (a,a')\,=\,(a,a') \label{prop2-6} \\  
         (a,a') \sqdoublecap (0,1)\,=\, (0,1) \sqdoublecap (a,a')\,=\,(0,1) \label{prop2-9} \\
        (a,a') \sqdoublecup \bigg((b,b') \sqdoublecup (c,c')\bigg) \,=\, \bigg((a,a') \sqdoublecup (b,b')\bigg) \sqdoublecup (c,c') \label{prop2-1} \\
         (a,a') \sqdoublecap \bigg((b,b') \sqdoublecap (c,c')\bigg) \,=\, \bigg((a,a') \sqdoublecap (b,b')\bigg) \sqdoublecap (c,c') \label{prop2-5}  \\
         (a,a') \sqdoublecap \bigg( (b,b') \sqdoublecup (c,c')\bigg) \,=\, \bigg( (a,a') \sqdoublecap (b,b') \bigg) \sqdoublecup \bigg( (a,a') \sqdoublecap (c,c')\bigg) \label{prop2-7} \\
         %\bigg((a,a') \sqdoublecup (b,b')\bigg) \sqdoublecap (c,c') \,=\, \bigg( (a,a') \sqdoublecap (c,c')\bigg) \sqdoublecup \bigg((b,b') \sqdoublecap (c,c')\bigg)  \label{prop2-8} \\
         (a,a') \sqdoublecup \bigg( (b,b') \sqdoublecap (c,c')\bigg) \,=\,    \bigg((a,a') \sqdoublecup  (b,b') \bigg)  \sqdoublecap \bigg( (a,a') \sqdoublecup (c,c')\bigg) \label{prop2-18} 
         %\bigg((a,a') \sqdoublecap (b,b')\bigg) \sqdoublecup (c,c') \,=\,    \bigg((a,a') \sqdoublecup  (c,c') \bigg)  \sqdoublecap \bigg( (b,b') \sqdoublecup (c,c')\bigg) \label{prop2-19} 
    \end{gather}
\end{lemma}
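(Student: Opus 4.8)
The plan is to establish all eighteen identities by a single uniform strategy: expand both sides using Definition~\ref{def:twisted-alg}, project onto the two coordinates of $A\times A$, and discharge each resulting scalar identity by a corresponding law of the underlying Heyting algebra $\pmb{A}$. The organising observation is that $\sqdoublecap$ acts as the meet $\sqcap$ on the first coordinate and as the join $\sqcup$ on the second, whereas $\sqdoublecup$ does the reverse; this coordinatewise behaviour reflects the fact that the second factor of the product is the \emph{order-dual} $\pmb{L}^\partial$. Consequently every identity decomposes into two scalar identities in $A$, and passing from the first coordinate to the second systematically interchanges $\sqcap$ with $\sqcup$. I would group the eighteen cases into three families and treat each family by the appropriate set of $\pmb{A}$-laws.

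The first family comprises the purely structural laws: idempotence \eqref{prop2-4} and \eqref{prop2-15}, commutativity \eqref{prop2-2} and \eqref{prop2-11}, associativity \eqref{prop2-1} and \eqref{prop2-5}, together with the unit and absorbing-element laws \eqref{prop2-17}, \eqref{prop2-3}, \eqref{prop2-6}, \eqref{prop2-9}. Each of these reduces, in both coordinates, to a standard bounded-lattice axiom for $\sqcap$ and $\sqcup$ in $\pmb{A}$, using the boundary facts $a\sqcup 1 = 1$, $a\sqcap 0 = 0$, $a\sqcup 0 = a$ and $a\sqcap 1 = a$ for the unit and annihilator cases. For instance, \eqref{prop2-6} unfolds to the pair $a\sqcap 1 = a$ and $a'\sqcup 0 = a'$, both of which hold by the bounds of $\pmb{A}$, while \eqref{prop2-9} unfolds to $a\sqcap 0 = 0$ and $a'\sqcup 1 = 1$.

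The second family consists of the involution \eqref{prop2-16} and the two De Morgan laws \eqref{prop2-10} and \eqref{prop2-12}. These are the cleanest cases and require only the definition of $\sslash$ as coordinate swap, with no appeal to any nontrivial lattice property. Indeed, both $\sslash((a,a') \sqdoublecup (b,b'))$ and $\sslash(a,a') \sqdoublecap \sslash(b,b')$ evaluate directly to $(a' \sqcap b', a \sqcup b)$, which gives \eqref{prop2-10}; \eqref{prop2-12} is symmetric, and \eqref{prop2-16} is immediate since swapping coordinates twice is the identity.

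The third family is the two distributive laws \eqref{prop2-7} and \eqref{prop2-18}, and this is the only place where the nontrivial content of the preliminary Lemma enters. Expanding \eqref{prop2-7} yields on the first coordinate $a\sqcap(b\sqcup c) = (a\sqcap b)\sqcup(a\sqcap c)$, which is exactly \eqref{dist-and}, and on the second coordinate $a'\sqcup(b'\sqcap c') = (a'\sqcup b')\sqcap(a'\sqcup c')$, which is exactly \eqref{dist-or}; the dual identity \eqref{prop2-18} invokes the same two laws with the coordinates interchanged. I expect this family to be the main, and indeed the only genuine, obstacle, since it is where the distributivity of $\pmb{A}$ is actually used; the remaining subtlety throughout is purely bookkeeping, namely keeping track of the dualisation in the second coordinate so that the correct one of \eqref{dist-and} or \eqref{dist-or} is applied. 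No appeal to the infinitary laws \eqref{prop1-13}--\eqref{prop1-14} or to completeness of $\pmb{A}$ is needed, as all eighteen identities are finitary.
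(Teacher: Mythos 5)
Your proposal is correct and follows essentially the same route as the paper's own proof: expand both sides coordinatewise, reduce each identity to a pair of scalar identities in the Heyting algebra $\pmb{A}$, and discharge them via the bounds, idempotence, commutativity, associativity, the definition of $\sslash$ as a swap, and — for \eqref{prop2-7} and \eqref{prop2-18} — the distributive laws \eqref{dist-and} and \eqref{dist-or}. (Minor slip: the lemma lists fifteen identities, not eighteen, but your three families do cover all of them.)
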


\begin{proof} 
\begin{itemize} 
    \item[]
    \item Property~\eqref{prop2-16} follows immediately since $\sslash$ is involutive.
     \item Property~\eqref{prop2-17} and Property~\eqref{prop2-3} are a consequence of $0$ and $1$ being the the least and greatest element of set $A$, respectively. Hence, 
     \begin{gather*}
         (a,a') \sqdoublecup (1,0) \,=\,( a \sqcup 1, a' \sqcap 0)\,=\,(1,0) \\
         (a,a') \sqdoublecup (0,1)\,=\, (a \sqcup 0, a' \sqcap 1)\,=\, (a,a')
     \end{gather*}
    \item Property~\eqref{prop2-4} is a consequence of operators $\sqcap$ and $\sqcup$ being idempotent, $(a,a') \sqdoublecup (a,a') \, = \, (a \sqcup a, a' \sqcap a')\,=\, (a, a')$. Property~\eqref{prop2-15} follows similarly.
    \item  Property~\eqref{prop2-2} is a consequence of operators $\sqcap$ and $\sqcup$ being commutative. Consequently, it follows $(a,a') \sqdoublecup (b,b')  \,=\, (a \sqcup b, a' \sqcap b') \,=\, (b \sqcup a, b' \sqcap a')  \,=\, (b,b') \sqdoublecup  (a,a')$. The proof for Property~\eqref{prop2-11} follows similarly. 
    \item The proof of Property~\eqref{prop2-10} follows directly by the definition of the operators,  $\sslash ((a,a') \sqdoublecup (b,b'))\,=\,(a' \sqcap b', a \sqcup b)\,=\,(a',a) \sqdoublecap (b',b) \,=\, \sslash(a,a') \sqdoublecap \sslash(b,b')$. Similarly, it is possible to prove Property~\eqref{prop2-12}.
    \item Property~\eqref{prop2-6} since $0$ and $1$ are the least and greatest element of $A$, respectively, it follows $(1,0) \sqdoublecap (a,a') \,=\,( 1 \sqcap a, 0 \sqcup a') \,=\, (a,a')$ and by \eqref{prop2-11} it follows $ (a,a') \sqdoublecap (1,0)\,=\,(a,a')$. The proof of Property~\eqref{prop2-9} follows similarly.
    \item To prove Property~\eqref{prop2-1} note that, 
    \begin{align*}
        &(a,a') \sqdoublecup ((b,b') \sqdoublecup (c,c'))
            = (a,a') \sqdoublecup (b \sqcup c, b' \sqcap c')
            = (a \sqcup (b \sqcup c), a' \sqcap (b' \sqcap c'))  \\
        &((a,a') \sqdoublecup (b,b')) \sqdoublecup (c,c') =(a \sqcup b, a' \sqcap b') \sqdoublecup (c,c') =((a \sqcup b) \sqcup c, (a' \sqcap b') \sqcap c')   
    \end{align*}
    Since $\sqcup$ and $\sqcap$ are associative it follows that $a \sqcup (b \sqcup c)=(a \sqcup b) \sqcup c$ and $a' \sqcap (b' \sqcap c')=(a' \sqcap b') \sqcap c'$. Therefore, operator $\sqdoublecup$ is also associative. Similarly, it is possible to prove Property~\eqref{prop2-5},i.e., $\sqdoublecap$ is associative.
    \item Property~\eqref{prop2-7} is a consequence of Property~\eqref{dist-and} and \eqref{dist-or}.
    \begin{align*}
            (a,a') \sqdoublecap \bigg((b,b') \sqdoublecup (c,c')\bigg) 
            =& (a,a') \sqdoublecap (b \sqcup c,b' \sqcap c')\\
            =& (a \sqcap (b \sqcup c),a' \sqcup (b' \sqcap c')) \\
            =& ((a \sqcap b) \sqcup (a \sqcap c), (a' \sqcup b') \sqcap (a' \sqcup c')) \\
            =& (a \sqcap b,a' \sqcup b') \sqdoublecup (a \sqcap c, a' \sqcup c') \\
            =& \bigg((a,a') \sqdoublecap (b,b')\bigg) \sqdoublecup \bigg((a,a') \sqdoublecap (c,c') \bigg)
    \end{align*}
\end{itemize}
\end{proof}

Finally, we present the definition of \emph{paraconsistent transition systems} parametric in a class of residuated lattices over a set $A$ of truth values.
\begin{definition}\label{def:systems}
  A paraconsistent transition system, abbreviated to PLTS, is a tuple $M=\model$ such that:
  \begin{itemize}
    \item $W$ is a non empty set of states,
    \item $R \colon W \times W \rightarrow A \times A$ is a paraconsistent accessibility relation. For any pair of states $(w_1,w_2)\in W \times W$ relation $R$ assigns a pair $(\doublet , \doublef) \in A \times A $ where $\doublet$ represents the evidence of the transition from $w_1$ to $w_2$ occurring and $\doublef$ represents the evidence of being prevented from occurring.
    \item $V:W\times \text{Prop} \rightarrow A \times A$ is a valuation function, that assigns to a proposition symbol $p$ at a given state $w$ a pair $(\doublet,\doublef) \in A \times A$ such that $\doublet$ is the evidence of $p$ holding at $w$ and $\doublef$ the evidence of not holding
  \end{itemize}
\end{definition}

\begin{example}\label{ex1:PLTS}
     Consider the residuated lattice $\pmb{3}$ and the set of proposition symbols $\{p\}$. The following model $M=(\{w_1,w_2\},R,V)$ is a PLTS where 
     \begin{center}\begin{minipage}{0.35\linewidth}
     \begin{tikzpicture}
             \node(A){$w_1$};
             \node(B)[right=2 cm of A]{$w_2$};
             \draw[->] (A) to [bend left=15] node [above]{$(\top,\bot)$} (B);
             \draw[->] (B) to [bend left=15] node [below]{$(\top,u)$} (A);
         \end{tikzpicture}
    \end{minipage}
    \begin{minipage}{0.4\linewidth}
        \begin{align*}
            V \,\colon \, W \times \{p\} &\rightarrow \{ \top, u, \bot \}\times  \{ \top, u, \bot \} \\
            (w_1,p)&\mapsto (\top,\bot)\\
            (w_2,p)& \mapsto (u,\bot)  
        \end{align*}
    \end{minipage}
    \end{center}
    We consider, following \L{}ukasiewicz , that the truth value $u$ is an intermediate value between $\top$ and $\bot$ \cite{Priest07}. Hence, a natural ``ordering''  of the three values is $\bot\,\leq \, u\, \leq \, \top$. In this example, each weight takes values in the set $\{\bot, u,\top\}$, making it possible to represent the resulting lattice of doing the direct product of the three-valued lattice and its  order-dual.
    \begin{center}\begin{tikzpicture}
          \node[red1] (max) at (0,2) {$(\top,\bot)$};
          \node[blue1] (a) at (-1,1) {$(u,\bot)$};
          \node[magenta1] (c) at (1,1) {$(\top,u)$};
          
          \node[blue1] (d) at (-2,0) {$(\bot, \bot)$};
          \node[red1] (e) at (0,0) {$(u,u)$};
          \node[magenta1] (f) at (2,0) {$(\top,\top)$};
          
          \node[blue1] (g) at (-1,-1) {$(\bot,u)$};
          \node[magenta1] (h) at (1,-1) {$(u,\top)$};
          \node[red1] (min) at (0,-2) {$(\bot,\top)$};
          \draw (max) -- (a) -- (e) -- (c) -- (max);
          \draw (a) -- (d) -- (g) -- (e) -- (h) -- (f) -- (c);
          \draw (g) -- (min) -- (h);
          %\draw[preaction={draw=white, -,line width=6pt}] (a) -- (e) -- (c);
    \end{tikzpicture}\end{center}
    All pairs marked in red represent consistent information. The pairs on the left, shown in blue, represent vague information, while those on the right, highlighted in magenta, represent inconsistent information.
    %Informally, we can say that there is consistent evidence that the transition from $w_1$ to $w_2$ can occur. Additionally, there is contradictory evidence regarding transition from $w_2$ to $w_1$ since there is unknown evidence that it will not occur and simultaneous absolute certainty that the transition can occur.

    Let us observe the intuition behind some of the operators in the twisted structure defined in Definition~\ref{def:twisted-alg}. For example, $V(w_2,p)\,=\,(u,\bot)$ indicates that at state $w_2$, there is uncertain evidence ($u$) that $p$ holds and minimal evidence ($\bot$) that $p$ does not hold. Similarly, one could say that at state $w_2$, there is minimal evidence  that the negation of p holds and uncertain evidence that it does not hold. This reflects the intuition behind the operator $\sslash$, which acts as an involutive negation by switching the positive and negative weights.
    %Formally, one can write     $V(w_2,\neg p)\,= \,\sslash V(w_2,p)\,=\,(\bot,u)$.

    Considering the valuations of $p$ at states $w_1$ and $w_2$, it is possible to compute the evidence of $p$ holding or not at both states. The certainty that $p$ holds in both states is equal to the certainty that it holds in each state ($\top \sqcap u$). Similarly, the certainty that $p$ does not hold in both states is  equal to the certainty that it does not hold in either state ($\bot \sqcup \bot$). This captures the intuition behind the operator $\sqdoublecap$.
    
    The interested reader is referred to \cite{FSEN23} for a more detailed exploration of the paraconsistent logic underlying PLTS.
 \end{example}

\section{Paraconsistent Kleene Algebra with tests}\label{sec:Katp}
The approach presented in this work aims to reason about program executions in a paraconsistent manner, where executions and tests may involve vagueness as well as inconsistencies. Consequently, rather than yielding a bivalent outcome as in traditional \texttt{KAT}, the outcome is graded by a pair of weights, one weight indicates the evidence for execution and the other the evidence for failure. Such framework entails the need to weaken the Boolean subalgebra of \texttt{KAT} which leads to the following variant:
\begin{definition}
    A \emph{paraconsistent Kleene algebra with tests} (\texttt{PKAT}) is a tuple
    $$(K,T,+,\cdot, \,^\star, \,^-, 0,1)$$
    such that 
    $(K,+,\cdot,\,^\star, 0,1)$ is a Kleene algebra, $(T,+,\cdot,\,^-,0,1)$ satisfies axioms~\eqref{213}-\eqref{218} and $B \subseteq T$.
    %\begin{enumerate}\item $(K,+,\cdot,\,^\star, 0,1)$ is a Kleene algebra\item $(T,+,\cdot,\,^-,0,1)$ satisfies axioms \eqref{213}\eqref{218}\item $B \subseteq T$ \end{enumerate}
    %$K$ and $T$ are sets  $T \subseteq K$. $0$ and $1$ are constants in $T$ that correspond to the least and greatest element, $+,$ and $ \cdot$ are binary operations, $\,^\star$ is a unary operations and $\,^-$ is also a unary operation only defined in $T$, satisfying axioms~\eqref{1}-\eqref{218}. 
    Relation $\leq$ is induced by $+$, that is, $p \leq q$ iff $p+q=q$.
\end{definition}
A key aspect of the \texttt{KAT} axiomatization lies in axioms~\eqref{219} and \eqref{220}, which informally express the principles of non-contradiction and excluded middle, respectively. These two principles play a significant role in the philosophy of paraconsistency, which rejects the principle of non-contradiction, and in intuitionism, which does not assume the principle of the excluded middle. The notion of \texttt{PKAT} takes in consideration these philosophies and consequently forms a weakened version of \texttt{KAT} by only rejecting axioms \eqref{219} and \eqref{220}. Hence, 
\begin{theorem}
    Any \texttt{KAT} is a \texttt{PKAT}.
\end{theorem}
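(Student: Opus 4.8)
The plan is to realise a given \texttt{KAT} $(K,B,+,\cdot,^\star,^-,0,1)$ as a \texttt{PKAT} by keeping the Kleene component untouched and taking the test sort to be exactly the Boolean component. That is, I would form the tuple $(K,T,+,\cdot,^\star,^-,0,1)$ with $T \defe B$ and all operations inherited verbatim from the \texttt{KAT}. With this single choice, the three requirements in the definition of \texttt{PKAT} can be checked in turn, and each reduces to something already available.

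First, $(K,+,\cdot,^\star,0,1)$ is a Kleene algebra directly by the hypothesis that the original structure is a \texttt{KAT}, so this clause needs no further argument. Second, I must verify that $(T,+,\cdot,^-,0,1)$ satisfies axioms~\eqref{213}--\eqref{218}. Since $T=B$ and $(B,+,\cdot,^-,0,1)$ is a Boolean algebra, it validates the full list~\eqref{213}--\eqref{220}; as~\eqref{213}--\eqref{218} is a sublist of these, every one of them holds a fortiori. Third, the inclusion $B \subseteq T$ becomes $B \subseteq B$, which is trivial; and since the \texttt{KAT} provides $B \subseteq K$, we also retain $T = B \subseteq K$, so the test sort sits inside the Kleene algebra exactly as in the \texttt{KAT} case.

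The conceptual content of the statement — and the only point requiring attention rather than calculation — is the observation that the axioms imposed on the tests of a \texttt{PKAT} are obtained from those of a Boolean algebra by deleting precisely~\eqref{219} (non-contradiction) and~\eqref{220} (excluded middle). A \texttt{PKAT} is therefore a genuine weakening of \texttt{KAT}, so every model of the stronger theory is automatically a model of the weaker one. I would consequently not expect any real obstacle: the proof is a matter of matching the two signatures and confirming that the dropped axioms are exactly those not demanded of $T$. The only subtlety worth a remark is notational, namely that the symbol $B$ appearing in the definition of \texttt{PKAT} is here instantiated by the Boolean carrier of the ambient \texttt{KAT}, so that the required inclusion is met in the strongest possible way, by equality.
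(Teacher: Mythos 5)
Your proposal is correct and takes essentially the same approach as the paper: both proofs rest on the single observation that the \texttt{PKAT} requirements (axioms~\eqref{1}--\eqref{218}) are a sublist of the \texttt{KAT} axioms~\eqref{1}--\eqref{220}, so every \texttt{KAT} satisfies them trivially. Your write-up merely makes explicit the signature matching ($T \defe B$) and the inclusion bookkeeping that the paper's one-line proof leaves implicit.
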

\begin{proof} By definition, any \texttt{KAT} satisfies axioms~\eqref{1}-\eqref{218}. Thus, trivially any \texttt{KAT} is a \texttt{PKAT}.
\end{proof}

The weakening discussed in this paper generalizes Boolean algebras, in that Boolean algebras are precisely those algebras that satisfy both the principle of non-contradiction \eqref{219} and the principle of the excluded middle \eqref{220}. This generalization is similar to that presented by Heyting algebras \cite{Borceux94}, where any Boolean algebra is a Heyting algebra that satisfies the principle of the excluded middle. In fact, the weakening presented in this paper is a stronger generalization than Heyting algebras, as any algebra failing to satisfy both principles inherently does not satisfy the principle of the excluded middle. A potential implication of this observation is that the discussed weakening of the Boolean algebra could potentially serve as algebraic models for propositional paraconsistent logic, much like how Heyting algebras model propositional intuitionistic logic \cite{BJ05} and Boolean algebras model propositional classical logic. 

In the remaining of this section we introduce two examples of algebras where program executions and tests may encompass inconsistencies and vagueness. Consequently, these algebras can be formalized as \texttt{PKAT}. To achieve this, we refer back to Definition~\ref{def:twisted-alg}, which establishes that for any complete Heyting algebra $\pmb{K}$ over a non-empty set $K$ of possible truth values, its corresponding twisted structure $\K$ allows for the computation of pairs of truth values $K \times K$.

\begin{remark}
    Given sets $T$ and $W$, we denote by $(T \times T)^W$ the set of functions $W \rightarrow (T \times T)$.
\end{remark}

%\subsection{Paraconsistent sets}\label{subsec:parasets}
\begin{definition}\label{def:setp}
    Let $W$ be a set and $\pmb{T}$ be a complete Heyting algebra over a non empty set of truth values $T$. The algebra of paraconsistent sets over the twisted-structure $\T$ is 
    $$\Set(\T) \,=\, \langle (T \times T)^W , (T \times T)^W,  + , \cdot , \,^\star,\,^-, \oslash,\varUpsilon \rangle $$
    For any paraconsistent sets over $W$, $\varphi, \psi \in (T \times T)^W$ and $w \in W$, operators are defined pointwise by
    \vspace{-0.5cm}
    \begin{center}\begin{minipage}{0.4\linewidth}
    \begin{gather*}
        \oslash(w) \,=\, (0,1) \\
        \varUpsilon(w) \,=\, (1,0) \\
        \overline{\varphi}(w) \,=\,  \sslash \varphi(w) 
    \end{gather*}
\end{minipage}
\begin{minipage}{0.4\linewidth}
    \begin{gather*}
        (\varphi + \psi)(w) \, = \, \varphi(w) \sqdoublecup \psi(w) \\
        (\varphi \cdot \psi)(w) \,=\, \varphi(w) \sqdoublecap \psi(w) \\  
        (\varphi^\star)(w) \,=\, \underset{n \geq 0}{ \bigsqdoublecup } \, \varphi^n(w)     
    \end{gather*}
\end{minipage}
\end{center}
    with $\varphi^0(w)=\varUpsilon(w)$ and $\varphi^{n+1}(w)\,=\, (\varphi\cdot \varphi^n)(w)$. The values of paraconsistent sets $\varphi(w)$ and $\psi(w)$ are elements of $T \times T$, and constants $\oslash $ and $ \varUpsilon$ are the least and the greatest elements of $T \times T$, respectively. The partial order in paraconsistent sets $\varphi$ and $\psi$ in $(T \times T)^W$ is given by 
    $$ \varphi \subseteq \psi \text{ if and only if } \forall \, w \in W,\, \varphi(w) \preccurlyeq \psi (w)$$
\end{definition}

\begin{example}
    Consider the set $W=\{w_1,w_2\}$ and the three-valued residuated lattice $\pmb{3}\,=\,\{\top,u, \bot\}$. Let $\varphi, \, \psi \,\in \Set(\twist{\mathbf{3}})$ be two paraconsistent sets defined as
    \begin{equation*}
    \begin{aligned}[c]
    \varphi \colon \, & W \rightarrow \pmb{3} \times \pmb{3} \\
    & w_1 \mapsto (\top,u)\\
    & w_2 \mapsto (u,u) 
    \end{aligned}
    \quad  \; \; \quad 
    \begin{aligned}[c]
    \psi \colon \, & W \rightarrow \pmb{3} \times \pmb{3} \\
    & w_1 \mapsto (\top,\bot)\\
    & w_2 \mapsto (\top,u)
    \end{aligned}
    \end{equation*}
   It is possible to define with operator $\,^-$ two other paraconsistent sets denoted by $\overline{\varphi}, \, \overline{\psi} \,\in \Set(\twist{\mathbf{3}})$ defined as,
   \begin{equation*}
   \begin{aligned}[c]
    \overline{\varphi} \colon \, & W \rightarrow \pmb{3} \times \pmb{3} \\
    & w_1 \mapsto (u,\top)\\
    & w_2 \mapsto (u,u) 
    \end{aligned}
    \quad  \; \; \quad 
    \begin{aligned}[c]
    \overline{\psi} \colon \, & W \rightarrow \pmb{3} \times \pmb{3} \\
    & w_1 \mapsto (\bot,\top)\\
    & w_2 \mapsto (u,\top)
    \end{aligned}
    \end{equation*}
    Note that $\varphi \subseteq \psi$, while $\overline{\psi} \subseteq \overline{\varphi}$.
\end{example}

\begin{theorem}\label{twist-kleene}
    For any complete Heyting algebra $\pmb{T}$ over a non empty set $T$ of possible truth values, $\Set(\T)$ forms a \texttt{PKAT}.
\end{theorem}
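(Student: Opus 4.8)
The plan is to exploit the fact that every operation of $\Set(\T)$ is defined \emph{pointwise}, so that each defining axiom of a \texttt{PKAT} reduces to an identity (or an implication) required to hold in the twisted structure $\T$ at every point $w \in W$. Fixing $w$ and writing $\varphi(w),\psi(w),\rho(w) \in T\times T$, the constants $\oslash$ and $\varUpsilon$ play the roles of $0=(0,1)$ and $1=(1,0)$, while $+,\cdot,\,^-$ act as $\sqdoublecup,\sqdoublecap,\sslash$. The semiring axioms \eqref{1}--\eqref{9} then follow term by term from Lemma~\ref{lem:twistprop}: associativity \eqref{1} and \eqref{5} from \eqref{prop2-1} and \eqref{prop2-5}, commutativity \eqref{2} from \eqref{prop2-2}, the units \eqref{3} and \eqref{6} from \eqref{prop2-3} and \eqref{prop2-6}, idempotence \eqref{4} from \eqref{prop2-4}, left distributivity \eqref{7} from \eqref{prop2-7}, and annihilation \eqref{9} from \eqref{prop2-9}; right distributivity \eqref{8} follows from \eqref{7} together with the commutativity of $\sqdoublecap$ \eqref{prop2-11}. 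In the same pointwise manner the test axioms \eqref{213}--\eqref{218} are read off: \eqref{213} from \eqref{prop2-18}, \eqref{215} from \eqref{prop2-18} and \eqref{prop2-2}, \eqref{214} from \eqref{prop2-11}, \eqref{216} from \eqref{prop2-15}, \eqref{217} from \eqref{prop2-16}, and \eqref{218} from \eqref{prop2-17}. Since in $\Set(\T)$ the test carrier coincides with the Kleene carrier $(T\times T)^W$, the carrier inclusion demanded by the definition is immediate; moreover the natural order $\leq$ induced by $+$ is exactly the pointwise lifting of $\preccurlyeq$, because $\sqdoublecup$ is the join of $\preccurlyeq$.

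The only substantial work concerns the star axioms \eqref{10}, \eqref{11}, \eqref{14}, \eqref{15}, which involve the infinite join $\varphi^\star(w)=\bigsqdoublecup_{n\geq 0}\varphi^n(w)$. The key lemma I would establish first is that $\sqdoublecap$ distributes over arbitrary joins, i.e. $(a,a')\sqdoublecap\bigsqdoublecup_{i\in I}(b_i,b_i') = \bigsqdoublecup_{i\in I}\bigl((a,a')\sqdoublecap(b_i,b_i')\bigr)$. Computing the left-hand side componentwise as $\bigl(a\sqcap\bigsqcup_i b_i,\; a'\sqcup\bigsqcap_i b_i'\bigr)$, the first component is handled by the infinitary law \eqref{prop1-13}, while the second is its order-dual, valid because $\T$ is the product of a complete, universally distributive lattice with its dual (as noted after Definition~\ref{def:twisted-alg}); the finite laws \eqref{dist-and}--\eqref{dist-or} alone do not suffice here. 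This is precisely the quantale property of $\T$ and is the technical heart of the argument.

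Granting this, the remaining steps are routine fixed-point reasoning. For \eqref{10}, distributing gives $(\varphi\cdot\varphi^\star)(w)=\varphi(w)\sqdoublecap\bigsqdoublecup_{n\geq 0}\varphi^n(w)=\bigsqdoublecup_{n\geq 0}\varphi^{n+1}(w)=\bigsqdoublecup_{n\geq 1}\varphi^n(w)$, and joining with $\varUpsilon(w)=\varphi^0(w)$ recovers $\bigsqdoublecup_{n\geq 0}\varphi^n(w)=\varphi^\star(w)$; axiom \eqref{11} is identical, using commutativity of $\sqdoublecap$ so that $\varphi^n\cdot\varphi=\varphi^{n+1}$. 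For the induction rule \eqref{14}, assuming $\varphi\cdot\rho\leq\rho$, I would show by induction on $n$ that $\varphi^n\cdot\rho\leq\rho$ (base case from $\varphi^0=\varUpsilon$ and \eqref{prop2-6}; inductive step from associativity \eqref{prop2-5}, monotonicity of $\sqdoublecap$ as the meet of $\preccurlyeq$, and the hypothesis) and then conclude $\varphi^\star\cdot\rho=\bigsqdoublecup_n(\varphi^n\cdot\rho)\leq\rho$, using the distributivity lemma once more and the fact that in a complete lattice a join of elements all below $\rho$ remains below $\rho$. Axiom \eqref{15} is symmetric.

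The main obstacle, as indicated, is the infinitary distributivity of $\sqdoublecap$ over $\bigsqdoublecup$ in its order-dual second coordinate: this is exactly what forces the appeal to completeness and universal distributivity of the base Heyting algebra and what makes the star well-behaved. Once it is in place, everything else is a pointwise transcription of Lemma~\ref{lem:twistprop}.
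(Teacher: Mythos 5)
Your proposal is correct and takes essentially the same route as the paper's own proof: a pointwise reduction of axioms \eqref{1}--\eqref{9} and \eqref{213}--\eqref{218} to the corresponding items of Lemma~\ref{lem:twistprop}, unfolding of $\varphi^\star$ for \eqref{10} and \eqref{11}, and an induction on $n$ followed by bounding the infinite join for \eqref{14} and \eqref{15}. The only difference is one of care rather than substance: where the paper silently applies the binary law \eqref{prop2-7} to an infinite join, you explicitly isolate the infinitary distributivity of $\sqdoublecap$ over $\bigsqdoublecup$ (including its order-dual half needed in the second coordinate, which does not follow from \eqref{prop1-13} alone) as the key lemma --- this is precisely the quantale property the paper invokes only informally in the remarks preceding Lemma~\ref{lem:twistprop}.
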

\begin{proof} 
    For a fixed complete Heyting algebra $\pmb{T}$, by Definition~\ref{def:twisted-alg}, we define its twisted structure $\T$. We will prove that $\Set(\T)$ defined as in Definition~\ref{def:setp} forms a \texttt{PKAT}, that is, we show that axioms~\eqref{1}-\eqref{218} are satisfied. 
   \\ \underline{Axiom~\eqref{1}} by Property~\eqref{prop2-1}, $\varphi(w) \sqdoublecup (\psi(w) \sqdoublecup \phi(w)) \,
       =\, (\varphi(w) \sqdoublecup \psi(w)) \sqdoublecup \phi(w)$. Using the definition of $+$, it follows that $ (\varphi+(\psi+\phi))(w) \,=\, ((\varphi+\psi)+\phi)(w)$.
    \\ \underline{Axiom~\eqref{2}} by Property~\eqref{prop2-2} and the definition of $+$, $(\varphi+\psi)(w) \,=\, (\psi+\varphi)(w) $.
    \\\underline{Axiom~\eqref{3}} from Property~\eqref{prop2-3} it follows that, $(\varphi + \oslash)(w)\,=\,\varphi(w) \sqdoublecup \oslash(w)\,=\, \varphi(w) \sqdoublecup (0,1)\,=\,\varphi(w) $. 
    \\ \underline{Axiom~\eqref{4}} is immediate by Property~\eqref{prop2-4},  $(\varphi+\varphi)(w)\,=\,\varphi(w)$
    \\ \underline{Axiom~\eqref{5}} is a consequence of Property~\eqref{prop2-5}, $\varphi(w) \sqdoublecap (\psi(w) \sqdoublecap \phi(w)) \,=\,(\varphi(w) \sqdoublecap \psi(w)) \sqdoublecap \phi(w)$. Using the definition of $\cdot$, $ (\varphi \cdot (\psi \cdot \phi))(w)\,=\, ((\varphi \cdot \psi)\cdot \phi)(w) $.
   \\ \underline{Axiom~\eqref{6}} from Property~\eqref{prop2-6} it follows that
   $(\varUpsilon \cdot \varphi)(w)\,=\, (\varphi \cdot \varUpsilon)(w)\,=\, \varphi(w)$. 
   \\ \underline{Axiom~\eqref{7}} using Property~\eqref{prop2-7} and the definition of $+$ and $\cdot$ it follows that, $(\varphi\cdot (\psi +\phi))(w) \,=\, ( (\varphi \cdot \psi)+ (\varphi \cdot \phi))(w)$.
    \underline{Axiom~\eqref{8}} follows by Property~\eqref{prop2-11}.
    \\ \underline{Axiom~\eqref{9}} by Property \eqref{prop2-9} it follows that $(\oslash \cdot \varphi)(w)\,=\, (\varphi \cdot \oslash)(w)\,=\, \oslash(w)$.
    \\ \underline{Axiom~\eqref{10}} can be derived as follows, 
    \begin{align*}
            (\varUpsilon+(\varphi \cdot \varphi^\star))(w)
            =& \varUpsilon(w) \sqdoublecup \bigg(\varphi(w) \sqdoublecap \underset{n \geq 0}{ \bigsqdoublecup}\, \varphi^n \bigg)\\
            %=& \varphi^0(w) \sqdoublecup \bigg( \varphi(w) \sqdoublecap \underset{n\geq 0}{ \bigsqdoublecup} \varphi^n(w) \bigg) \\
            =& \varphi^0(w) \sqdoublecup \bigg(\varphi(w) \sqdoublecap (\varphi^0(w) \sqdoublecup \varphi(w) \sqdoublecup \varphi^2(w) \ldots ) \bigg) \\
            & \justif{using~\eqref{prop2-7} and defn. of $\,^\star$}
            =& \varphi^0(w)  \sqdoublecup ( \varphi(w) \sqdoublecup \varphi^2(w) \sqdoublecup \ldots )\\
            %=& \underset{n \geq 0}{ \bigsqdoublecup} \varphi^n(w) \\
            =& \varphi^\star(w) 
        \end{align*}
    Similarly, it is possible to show \underline{Axiom~\eqref{11}}.%$(\varUpsilon+(\varphi^\star \cdot \varphi))(w)\,=\, \varphi^\star(w)$.
    \\ \underline{Axiom~\eqref{14}} Let us start by assuming that $(\varphi \cdot \psi)(w)  \, \preccurlyeq \, \psi(w)$. Then, 
    \begin{align}
        (\varphi^\star \cdot \psi)(w) 
        =&  \bigg( \varphi^0(w) \sqdoublecup \varphi(w) \sqdoublecup \varphi^2(w) \sqdoublecup \ldots \bigg) \sqdoublecap \psi(w)  \notag \\
        &  \notag \justif{using \eqref{prop2-11}, \eqref{prop2-7} and  \eqref{prop2-6}} 
        =& \psi(w) \sqdoublecup (\varphi(w) \sqdoublecap \psi(w)) \sqdoublecup (\varphi^2(w)  \sqdoublecap \psi(w)) \sqdoublecup \ldots  \notag \\
        %& \justif{defn. of $\,^\star$}
        %=& \psi(w) \sqdoublecup (\varphi(w) \sqdoublecap \psi(w)) \sqdoublecup (\varphi(w) \sqdoublecap \varphi(w)  \sqdoublecap \psi(w)) \sqdoublecup \ldots  \\
        =& \psi(w) \sqdoublecup (\varphi \cdot \psi)(w) \sqdoublecup (\varphi^2 \cdot \psi)(w) \sqdoublecup \ldots \label{aux-1}
    \end{align}
    For any integer $n \geq 0$, $(\varphi^n \cdot \psi)(w)\,= \,\underset{n}{\underbrace{{\varphi \cdot \ldots \cdot \varphi}}} \cdot \psi(w)$. 
    Using the hypothesis $n$ times, it follows that $ (\varphi^n \cdot \psi)(w) \,\preccurlyeq\, \psi(w)$. Hence, by \eqref{aux-1} and given that $(\psi \cdot \psi)(w) \,\preccurlyeq \, \psi(w)$ it follows that 
    $$\psi(w) \sqdoublecup (\varphi \cdot \psi)(w) \sqdoublecup (\varphi^2 \cdot \psi)(w)\sqdoublecup \ldots \, \preccurlyeq \, \psi(w)$$
    Similarly, it is possible to show \underline{Axiom~\eqref{15}}. \underline{Axiom~\eqref{213}}  $(\varphi + (\psi \cdot \phi))(w)\,=\,((\varphi + \psi)\cdot (\varphi +  \phi))(w) $ follows by Property~\eqref{prop2-18}  and the definition of $\cdot$ and $+$.
    Similarly, \underline{Axiom~\eqref{215}} follows from Property~\eqref{prop2-4} and \eqref{prop2-18}. \underline{Axiom~\eqref{214}} results from Property~\eqref{prop2-11}, $(\varphi \cdot \psi)(w) \,=\, (\psi \cdot \varphi)(w)$. Finally, it is possible to show \underline{Axiom~\eqref{216}}, that is, $(\varphi \cdot \varphi)(w) \,=\, \varphi(w)$; \underline{Axiom~\eqref{217}}, that is, $\overline{\overline{\varphi}}(w)\,=\, \varphi(w)$ and 
    \underline{Axiom~\eqref{218}}, that is, $(\varphi+ \varUpsilon)(w)\,=\, \varUpsilon(w)$ follow directly from Property~\eqref{prop2-15}, \eqref{prop2-16} and \eqref{prop2-17}, respectively.
\end{proof}

The aim of the following definition is to explore paraconsistent programs similar to the well-known binary programs presented in Example~\ref{ex:binrel}. These paraconsistent programs involve computations that may exhibit vagueness or inconsistency. Thus potentially lending themselves to representation as PLTS with transitions and valuations weighted by pairs of weights, tailored to fit the specific problem domain. Paraconsistent relations are defined over a pair of states $W\times W$, where a test $\alpha$ can be interpreted at any state $w \in W$. Specifically, if a test $\alpha$ is evaluated by a pair $(\doublet,\doublef)$ at state $w$ , then 
$\doublet$ the evidence of the test holding and $\doublef$ measures the evidence of the test not holding at state $w$. Let's proceed to define the algebra of paraconsistent relations.
\begin{definition}\label{def:prel}
     Let $W$ be a set, $\pmb{K}$ and $\pmb{T}$ be complete Heyting algebras over a non empty set of truth values $K$ and $T$, respectively. The algebra of paraconsistent relations over $\K$ and $\T$  is defined as 
     $$\Rel(\K,\T) \,=\, \langle (K \times K)^{W \times W}, ( T \times T) ^{W \times W},  +, \cdot , \,^\star,  \,^-, \oslash,\varLambda \rangle $$
    where $(K \times K)^{W \times W}$ is the set of all paraconsistent relations over $W \times W$, i.e. functions $(W \times W) \to (K \times K)$. The elements of $(T \times T) ^{W \times W}$ are \emph{paraconsistent tests} $t$ such that $t(u,v)\,=\,(0,1)$ whenever $u \neq v$. 
    The operators of paraconsistent relations are defined pointwise by
    \vspace{-0.5cm}
    \begin{center}\begin{minipage}{0.4\linewidth}
    \begin{align*}
        &\oslash(u,v)\,=\,(0,1) \\
        &\varLambda(u,v)\,=\,
            \begin{cases}
            (1,0) & \text{if } u=v \\
            (0,1) & \text{otherwise}
            \end{cases} \\
        &\overline{t}(u,v)\,=\,\sslash t(u,v) 
    \end{align*}
    \end{minipage}
    \begin{minipage}{0.4\linewidth}
        \begin{align*}
            &(R + R')(u,v) \,=\, R(u,v) \sqdoublecup R'(u,v)\\
            &(R \cdot R')(u,v) \,=\, \underset{w \in W}{\bigsqdoublecup} \,(R(u,w) \sqdoublecap R'(w,v)) \\  
            &R^\star(u,v)\,=\, \underset{n\geq 0}{\bigsqdoublecup} \, R^n(u,v) 
        \end{align*}
    \end{minipage}
    \end{center}
    with $R^{n+1}(u,v)\,=\, (R \cdot R^n)(u,v)$ and $R^0(u,v)\,=\, \varLambda(u,v)$.
    The value of paraconsistent relations, $R(u,v)$ and $R'(u,v)$ are elements of $K \times K$, the value of $t(u,v)$ is an element of $T \times T$, and constants $\oslash,\, \varLambda$ are the least and the greatest elements of $T \times T$. The partial order $\subseteq$ for paraconsistent relations is given by
    $$ R \subseteq R' \; \text{ if and only if } \;\forall u,\,v \in W,\, R(u,v) \preccurlyeq R'(u,v)$$
\end{definition}

\begin{theorem}\label{theo:relp-Katp}
    Let $\pmb{K}$ and $\pmb{T}$ be complete Heyting algebras over a non empty set $K$ and $T$ of possible truth values such that $T \subseteq K$, $\Rel(\K,\T)$ forms a \texttt{PKAT}.
\end{theorem}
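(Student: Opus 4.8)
The plan is to verify, one family at a time, that $\Rel(\K,\T)$ satisfies every \texttt{PKAT} axiom: that $\big((K\times K)^{W\times W},+,\cdot,\,^\star,\oslash,\varLambda\big)$ is a Kleene algebra (axioms~\eqref{1}--\eqref{15}), that the paraconsistent tests $(T\times T)^{W\times W}$ satisfy axioms~\eqref{213}--\eqref{218}, and that the test carrier sits inside the relation carrier. The last point is exactly where the hypothesis $T\subseteq K$ enters: it gives $(T\times T)^{W\times W}\subseteq (K\times K)^{W\times W}$, so every paraconsistent test is in particular a paraconsistent relation, as the \texttt{PKAT} definition demands. As in the proof of Theorem~\ref{twist-kleene}, the additive axioms~\eqref{1}--\eqref{4} are computed pointwise and reduce verbatim to the twisted-structure identities~\eqref{prop2-1},~\eqref{prop2-2},~\eqref{prop2-3} and~\eqref{prop2-4}, since $+$ is the pointwise join $\sqdoublecup$ and $\oslash$ is the pointwise least element.

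The genuinely new work is in the multiplicative and star axioms, because here $\cdot$ is relational (matrix-like) composition $(R\cdot R')(u,v)=\bigsqdoublecup_{w\in W}\big(R(u,w)\sqdoublecap R'(w,v)\big)$ rather than a pointwise meet. The enabling fact, which I would record as a preliminary, is the infinitary distributivity of $\sqdoublecap$ over $\bigsqdoublecup$ in $\T$; it follows componentwise from~\eqref{prop1-13}--\eqref{prop1-14} and is precisely the quantale structure noted before Lemma~\ref{lem:twistprop}. Associativity~\eqref{5} is then the representative computation:
\begin{align*}
   \big((R\cdot S)\cdot T\big)(u,v)
   &= \bigsqdoublecup_{w'}\Big(\big(\bigsqdoublecup_{w} (R(u,w)\sqdoublecap S(w,w'))\big)\sqdoublecap T(w',v)\Big)\\
   &= \bigsqdoublecup_{w}\Big(R(u,w)\sqdoublecap \bigsqdoublecup_{w'}\big(S(w,w')\sqdoublecap T(w',v)\big)\Big)
   = \big(R\cdot (S\cdot T)\big)(u,v),
\end{align*}
using infinitary distributivity to pull the joins out, associativity of $\sqdoublecap$~\eqref{prop2-5}, and the freedom to reorder the double join. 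Left and right distributivity~\eqref{7}--\eqref{8} follow the same pattern from~\eqref{prop2-7} (with~\eqref{prop2-11} supplying the right-hand variant), and~\eqref{9} from~\eqref{prop2-9} and~\eqref{prop2-3}. The unit law~\eqref{6} for $\varLambda$ is handled by splitting the defining join at the diagonal: the term $w=u$ contributes $\varLambda(u,u)\sqdoublecap R(u,v)=(1,0)\sqdoublecap R(u,v)=R(u,v)$ by~\eqref{prop2-6}, while every term $w\neq u$ contributes $(0,1)\sqdoublecap R(w,v)=(0,1)$ by~\eqref{prop2-9}, and these are absorbed by~\eqref{prop2-3}; the case $R\cdot\varLambda$ is symmetric.

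For the star, with $R^\star=\bigsqdoublecup_{n\geq 0}R^n$ and $R^0=\varLambda$, the unfolding axioms~\eqref{10}--\eqref{11} follow by pushing the join through composition via infinitary distributivity and reindexing $R\cdot R^n=R^{n+1}$, mirroring the corresponding step of Theorem~\ref{twist-kleene}. For the induction axioms~\eqref{14}--\eqref{15}, assuming $R\cdot S\subseteq S$ I would first prove $R^n\cdot S\subseteq S$ for every $n$ by induction, using monotonicity of $\cdot$ (a consequence of monotonicity of $\sqdoublecap$ and of $\bigsqdoublecup$); then, since $(R^\star\cdot S)(u,v)=\bigsqdoublecup_{n\geq 0}(R^n\cdot S)(u,v)$ and $S(u,v)$ is a common upper bound of all the summands, the join is $\preccurlyeq S(u,v)$, giving $R^\star\cdot S\subseteq S$.

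Finally, for the test axioms I would exploit that a paraconsistent test is $(0,1)$ off the diagonal, so composition collapses there: $(t\cdot t')(u,v)=(0,1)$ for $u\neq v$ (each summand reduces to $(0,1)$ by~\eqref{prop2-9}) and $(t\cdot t')(u,u)=t(u,u)\sqdoublecap t'(u,u)$ (the off-diagonal summands vanish by~\eqref{prop2-9} and~\eqref{prop2-3}); in particular the product of two tests is again a test. Hence on the diagonal tests behave exactly as the elements of $\Set(\T)$, so~\eqref{213}--\eqref{216} and~\eqref{218} reduce to the twisted-structure identities~\eqref{prop2-18},~\eqref{prop2-4},~\eqref{prop2-11},~\eqref{prop2-15} and~\eqref{prop2-17}, while~\eqref{217} is involutivity~\eqref{prop2-16}. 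The main obstacle is concentrated in associativity~\eqref{5} and the induction axioms~\eqref{14}--\eqref{15}: both genuinely require interchanging and reindexing arbitrary (possibly infinite) joins against $\sqdoublecap$, i.e.\ invoking completeness of $\pmb{K}$ and the quantale law, whereas every remaining axiom is pointwise bookkeeping over Lemma~\ref{lem:twistprop}.
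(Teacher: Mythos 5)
Your proposal is correct and follows essentially the same route as the paper's proof: pointwise reduction to Lemma~\ref{lem:twistprop} for the additive and test axioms, the diagonal-collapse argument for the unit $\varLambda$ and for products of tests, infinitary distributivity of $\sqdoublecap$ over $\bigsqdoublecup$ to interchange joins in the associativity, distributivity and star-unfolding computations, and the iterated hypothesis $R^n\cdot S\subseteq S$ bounding the join for the induction axioms~\eqref{14}--\eqref{15}. The only differences are presentational: you record the quantale law as an explicit preliminary and note that the product of two tests is again a test, both of which the paper uses implicitly.
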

\begin{proof}
    Let $\pmb{K}$ and $\pmb{T}$ be complete Heyting algebras over set $K$ and $T$, respectively, such that $T \subseteq K$. It is possible to define the corresponding twisted structures $\K$ and $\T$ as described in Definition~\ref{def:twisted-alg}. We will prove that $\Rel(\K,\T)$ forms a \texttt{PKAT}, that is, axioms~\eqref{1}-\eqref{218} are satisfied. The satisfaction of axioms~\eqref{1}-\eqref{4} and \eqref{217} is similar to Theorem~\ref{twist-kleene}. Let us show the remaining.    \\
     \underline{Axiom~\eqref{5}} derives as follows
    \begin{align*}
        (R\cdot (R' \cdot R''))(w,v) %= & \underset{u \in W}{\bigsqdoublecup} \bigg( R(w,u) \sqdoublecap (R' \cdot R'')(u,v) \bigg) \\
        =& \underset{u \in W}{\bigsqdoublecup} \bigg( R(w,u) \sqdoublecap 
        \underset{t \in W}{\bigsqdoublecup} \bigg( R'(u,t) \sqdoublecap R''(t,v) \bigg) \bigg) \\
        =& \underset{u\in W}{\bigsqdoublecup}\, \underset{t \in W}{\bigsqdoublecup}  \bigg( R(w,u) \sqdoublecap R'(u,t) \sqdoublecap R''(t,v) \bigg)\\
        =& \underset{t \in W}{\bigsqdoublecup}  \bigg(  \underset{u \in W}{\bigsqdoublecup} \bigg( R(w,u) \sqdoublecap R'(u,t) \bigg) \sqdoublecap R''(t,v) \bigg) \\
        =& \underset{u \in W}{\bigsqdoublecup}  \bigg(  (R \cdot R')(w,t) \sqdoublecap R''(t,v) \bigg) \\
        =& ((R \cdot R')\cdot R'') (w,v)
    \end{align*}
    \underline{Axiom~\eqref{6}} by definition of $\cdot$, $(\varLambda \cdot  R) (w,v) = \underset{u \in W}{\bigsqdoublecup} \bigg( \varLambda(w,u) \sqdoublecap R(u,v) \bigg)$. \\
    For all $u \neq w$, $\varLambda(w,u)\,=\,(0,1)$ and by \eqref{prop2-9} it follows,  $\varLambda(w,u) \sqdoublecap R(u,v)\,=\,(0,1)$. Futhermore, by Property~\eqref{prop2-17} it follows that  
    $(\varLambda \cdot  R) (w,v) \,=\,  \varLambda(w,w) \sqdoublecap R(w,v) \,=\, (1,0) \sqdoublecap R(w,v)\,=\,R(w,v)$.
    %\begin{align*}
        %(\varLambda \cdot  R) (w,v) =& \underset{u \in W}{\bigsqdoublecup} \bigg( \varLambda(w,u) \sqdoublecap R(u,v) \bigg) \\
        %& \justif{for all $u \neq w$ and by \eqref{prop2-9}, $\varLambda(w,u) \sqdoublecap R(u,v)\,=\,(0,1)$}
        %=& \varLambda(w,w) \sqdoublecap R(w,v) \\
        %=& (1,0) \sqdoublecap R(w,v) \\
        %=& R(w,v)
    %\end{align*}
    \\ \underline{Axiom~\eqref{7}} proceeds as 
    \begin{align*}
        (R \cdot  (R'+R''))(w,v) 
        %=& \underset{u \in W}{\bigsqdoublecup} (R(w,u) \sqdoublecap (R'+R'')(u,v) ) \\
        =& \underset{u \in W}{\bigsqdoublecup} (R(w,u) \sqdoublecap (R'(u,v) \sqdoublecup R''(u,v))) \\
        & \justif{using Property~\eqref{prop2-7}}
        =& \underset{u \in W}{\bigsqdoublecup} \bigg( \bigg(R(w,u) \sqdoublecap R'(u,v)\bigg) \sqdoublecup  \bigg( R(w,u) \sqdoublecap R''(u,v)\bigg) \bigg)\\
        =& \underset{u \in W}{\bigsqdoublecup}  \bigg(R(w,u) \sqdoublecap R'(u,v)\bigg) \, \sqdoublecup \, \underset{u \in W}{\bigsqdoublecup}   \bigg( R(w,u) \sqdoublecap R''(u,v)\bigg) \\
        =& (R \cdot  R')(w,v) + (R \cdot R'')(w,v)
    \end{align*}
    \underline{Axiom~\eqref{8}} follows similarly by Property~\eqref{prop2-11}.
    \\ \underline{Axiom~\eqref{9}} by  Property~\eqref{prop2-9} it follows that, 
    \begin{equation*}
        (\oslash \, \cdot\, R)(w,v) = \underset{u \in W}{\bigsqdoublecup} \bigg( \oslash(w,u) \sqdoublecap R(u,v) \bigg)  
        =\underset{u \in W}{\bigsqdoublecup} \bigg( (0,1) \sqdoublecap R(u,v) \bigg)  
        = (0,1)
        = \oslash(w,v) 
    \end{equation*}
    Consequently, since $\sqdoublecap$ is commutative, it follows $(R\, \cdot \,\oslash)(w,v) \,=\, \oslash(w,v)$.
    \\ Similar to \underline{Axiom~\eqref{11}}, \underline{Axiom~\eqref{10}} derives from,
    \begin{align*}
        (\varLambda+(R \cdot R^*))(w,v) 
        %=& \varLambda(w,v) \sqdoublecup \bigg(\underset{u \in W}{\bigsqdoublecup} (R(w,u) \sqdoublecap R^*(u,v)) \bigg) \\
        =& \varLambda(w,v) \sqdoublecup  \bigg(\underset{u \in W}{\bigsqdoublecup} R(w,u) \sqdoublecap \bigg( \underset{n \geq 0}{\bigsqdoublecup} R^n(u,v) \bigg)  \bigg)  \\
        & \justif{using Property \eqref{prop2-7}}
        =& \varLambda(w,v) \sqdoublecup  \bigg(  \underset{n \geq 0}{\bigsqdoublecup} \; \bigg( \underset{u \in W}{\bigsqdoublecup} R(w,u) \sqdoublecap  R^n(u,v) \bigg) \bigg) \\
        =& R^0(w,v) \sqdoublecup  \bigg(  \underset{n \geq 0}{\bigsqdoublecup}  (R \cdot R^n)(w,v)  \bigg) \\
        %=& \varLambda(w,v) \sqdoublecup  \bigg(  \underset{n > 0}{\bigsqdoublecup}  \; R^{n}(w,v)  \bigg) \\
        =& R^0(w,v) \sqdoublecup  \bigg(  \underset{n > 0}{\bigsqdoublecup}  \; R^{n}(w,v)  \bigg) \\
        %=& \underset{n \geq 0}{\bigsqdoublecup}  \; R^{n}(w,v)  \\
        =& R^*(w,v)
    \end{align*}
    \underline{Axiom~\eqref{14}}\, Let us assume that $(R \cdot R')(w,v) \,\preccurlyeq \, R'(w,v)$. Then,
    \begin{align*}
        (R^* \cdot R')(w,v) %=& \underset{u \in W}{\bigsqdoublecup} (R^*(w,u) \sqdoublecap R'(u,v)) \\
        =& \underset{u \in W}{\bigsqdoublecup}\; \bigg( \underset{n\geq 0}{\bigsqdoublecup}  R^n(w,u) \sqdoublecap R'(u,v) \bigg)\\ 
        =&\underset{n\geq 0}{\bigsqdoublecup} \; \bigg( \underset{u \in W}{\bigsqdoublecup} ( R^n(w,u) \sqdoublecap R'(u,v) ) \bigg) \\
        =&\underset{n\geq 0}{\bigsqdoublecup} \; (R^n \cdot R')(w,v) \\
        =& R'(w,v) \sqdoublecup (R\cdot R')(w,v) \sqdoublecup (R^2 \cdot R')(w,v) \sqdoublecup \ldots \\
        & \justif{Hypothesis}
        \preccurlyeq & R'(w,v)
    \end{align*}
    Similarly, it is possible to show \underline{Axiom~\eqref{15}}.
    \\ \underline{Axiom~\eqref{213}}  derives as,
    \begin{align*}
        (t + (t' \cdot t''))(u,v) 
        %=& t(u,v) \sqdoublecup (t' \cdot t'')(u,v)  \\
        =& t(u,v) \sqdoublecup  \bigg(\underset{w \in W}{\bigsqdoublecup}(t'(u,w) \sqdoublecap t''(w,v)) \bigg)  \\
        &\justif{(step $\star$)}
       =& t(u,v) \sqdoublecup  \bigg( t'(u,v) \sqdoublecap t''(u,v) \bigg) \\
       & \justif{using Property~\eqref{prop2-18}} 
       =& \bigg( t(u,v) \sqdoublecup  t'(u,v) \bigg) \sqdoublecap \bigg( t(u,v) \sqdoublecup t''(u,v) \bigg) \\ 
        =&  (t+ t')(u,v)  \sqdoublecap (t + t'')(u,v) 
    \end{align*}
    (step $\star$) for any $w \in W$, $(t'(u,w) \sqdoublecap t''(w,v)) \neq (0,1)$, iff $(t'(u,w)\neq (0,1)$ and $t''(w,v) \neq (0,1))$. Thus, $(t'(u,w) \sqdoublecap t''(w,v)) \neq (0,1)$ only when $w=u$ and $w=v$. \\ Also note that,
    \begin{align*}
        ((t + t') \cdot ( t + t''))(u,v) 
        %=& \underset{w \in W}{\bigsqdoublecup} \bigg( (t + t')(u,w) \sqdoublecap (t + t'')(w,v) \bigg)  \\
        =& \underset{w \in W}{\bigsqdoublecup} \bigg( \bigg( t(u,w) \sqdoublecup t'(u,w) \bigg) \sqdoublecap \bigg(t(w,v) \sqdoublecup t''(w,v)\bigg) \bigg)  \\
        & \justif{(step $\star \star$)}
        =& \bigg( t(u,v) \sqdoublecup t'(u,v) \bigg) \sqdoublecap \bigg(t(u,v) \sqdoublecup t''(u,v)\bigg) \\
        =& (t+t')(u,v) \sqdoublecap (t+t'')(u,v) 
    \end{align*}
    (step $\star \star$) for any $w \in W$,
    \begin{gather*}
        \bigg( t(u,w) \sqdoublecup t'(u,w) \bigg) \sqdoublecap \bigg(t(w,v) \sqdoublecup t''(w,v)\bigg)  \,\neq \,(0,1)
    \end{gather*}
    if and only if $( t(u,w) \sqdoublecup t'(u,w) \neq (0,1) $ and $t(w,v) \sqdoublecup t''(w,v) \neq (0,1))$ Hence, $t(u,w)$, $t'(u,w)$, $t(w,v)$ and $t''(w,v)$ must all be different from $(0,1)$ which implies $u=w=v$. 
      \\ Therefore, we show that $ (t + (t' \cdot t''))(u,v)  \,=\,((t + t') \cdot ( t + t''))(u,v)  $. By Property \eqref{prop2-4} it is possible to show \underline{Axiom~\eqref{215}}.
     \\ \underline{Axiom~\eqref{214}} by definition,
    \begin{gather*}
        (t \cdot t')(u,v) 
        = \underset{w \in W}{\bigsqdoublecup} \bigg( t(u,w) \sqdoublecap t'(w,v) \bigg)  \\
        (t' \cdot t)(u,v) 
        =\underset{w \in W}{\bigsqdoublecup} \bigg( t'(u,w) \sqdoublecap t(w,v) \bigg)  
    \end{gather*}
    Whenever $w \neq u$ or $w\neq v$, by definition of test and by \eqref{prop2-9}, it follows that $ t(u,w) \sqdoublecap t'(w,v) =  t'(u,w) \sqdoublecap t(w,v) =(0,1) $. Hence, $t(u,w) \sqdoublecap t'(w,v) \neq (0,1)$ and $t'(u,w) \sqdoublecap t(w,v) \neq (0,1)$ only when $w = u$ and $w=v$. Thus, $$(t \cdot t')(u,v) ,=\, t(u,v) \sqdoublecap t'(u,v) \,=\, t'(u,v) \sqdoublecap  t(u,v)\,=\, (t' \cdot t)(u,v)$$
    \underline{Axiom~\eqref{216}} by definition of $\cdot$, $(t \cdot t)(u,v)
        \,=\, \underset{w\in W}{\bigsqdoublecup} \bigg( t(u,w) \sqdoublecap t(w,v) \bigg)$. \\ Since $ t(u,w) \sqdoublecap t(w,v)\, =\, (0,1)$ whenever $u\neq w $ or $ w\neq v$. Hence, by \eqref{prop2-3} and \eqref{prop2-15} $(t \cdot t)(u,v) \,=\, t(u,v) \sqdoublecap t(u,v)\,=\, t(u,v) $.
    \\ \underline{Axiom~\eqref{218}} By definition of $+$, $(t + \varLambda)(u,v)
    = t(u,v)+\varLambda(u,v)$. If $u=v$ then, $\varLambda(u,v)=(1,0)$ and by Property~\eqref{prop2-17} $(t + \varLambda)(u,v)
    = (1,0)= \varLambda(u,v)$. Otherwise, $t(u,v)=(0,1)$ and by Property~\eqref{prop2-3},  $(t + \varLambda)(u,v)=\varLambda(u,v)$. Hence,
    $(t + \varLambda)(u,v)=\varLambda(u,v)$.
\end{proof}

\section{Conclusion and future work}\label{sec:conclusion}
This paper contributes to an ongoing research agenda focused on paraconsistent transition systems (PLTS) and their logics. PLTS were initially introduced in \cite{lsfa22}, followed by the introduction of a logic to express their properties in \cite{NCL22} and the respective application to the analysis of quantum circuits was further discussed in \cite{BarbosaM23}. Subsequently, the algebra of constructors and abstractors for PLTS was discussed in \cite{TASE23}, which served as the foundation for a structured specification theory outlined in \cite{FSEN23}.

Here our focus is on developing an algebraic counterpart to this line of work. Hence, we take the initial steps towards introducing a variant of \texttt{KAT} to reason about vague and inconsistent computations and assertions. A similar roadmap for reasoning about fuzzy computations can be found in \cite{GMB19}. As in \cite{GMB19}, given that such assertions often take the form of tests, our approach lies in the modification of \texttt{KAT} that deals with properties of tests. The approach taken in this paper rejects the principle of non-contradiction and the principle of the excluded middle; consequently, some classical properties of Boolean algebra are lost. The resulting  \texttt{PKAT} can interpret computations entailing contradictions or vagueness.

A possible application of this work is in quantum circuits where a phenomenon known as \emph{decoherence} can occur. Such phenomenon is characterized by the loss of information from the circuit due to unwanted interaction with the environment. When the coherence time of a qubit is exceeded, there is an increasing probability that the circuit does not behave according to its design. Typically, qubit coherence is not specified exactly but is given as time intervals in the literature, corresponding to worst-case and best-case scenarios. In \cite{lsfa22}, it is proposed to use the two accessibility relations in PLTS to model these scenarios simultaneously. The minimum coherence time $t_{\text{min}}$ determines the negative weight, interpreted as the likelihood that the system evolves to a decoherent state, and the maximum coherence time $t_{\text{max}}$ determines the positive weight, interpreted as the likelihood that the system remains coherent.
%Decoherence is not an exact measure, as there exist time intervals where the circuit is said to lose coherence. Thus, a transition in a quantum model consists of two accessibility relations, one describing the possibility of the system remaining coherent and the other the possibility of it becoming unstable, see \cite{BarbosaM23}.
With further considerations, it becomes feasible to translate quantum circuits into PLTS, as elaborated in \cite{BarbosaM23,lsfa22}.

The parametric approach adopted in this work is based on prior work documented in \cite{Bou}, which has been applied to formalize paraconsistent transition systems and their corresponding logics in \cite{NCL22,TASE23}. Given two residuated lattices $\pmb{K}$ and $\pmb{T}$ over a set of possible truth values $K$ and $T$ such that $T \subseteq K$, it is possible to define a twisted structure to operate on pairs $K \times K$. This structure allows for the introduction of two algebras in this paper: $\Set(\T)$ and $\Rel(\K,\T)$ parametric to the twisted structures. The main results demonstrate that both algebras $\Set(\T)$ and $\Rel(\K,\T)$ form a \texttt{PKAT}, Theorem~\ref{twist-kleene} and \ref{theo:relp-Katp}.

Since \texttt{KAT} provides a framework for reasoning about imperative programs in a (quasi) equational way, we aim to explore an encoding of propositional Hoare logic into \texttt{PKAT}. However, for such task it may be necessary to refine \texttt{PKAT} with additional properties and both the meaning of Hoare triples and the inference rules need adjustment. Similarly to \cite{GMB19}, we propose encoding a Hoare triple $\{b\}p\{c\}$ in \texttt{PKAT} as $b \cdot p \preccurlyeq b \cdot p \cdot c$, conveying that program correctness can only improve with execution. 

%Since \texttt{KAT} provides a framework for reasoning about imperative programs in a (quasi) equational way, we still have left to explore a possible encoding of propositional Hoare logic into \texttt{PKAT}. Given that \texttt{PKAT} deals with inconsistent and vague tests, both the meaning of Hoare triples and the inference rules need adjustment. Similarly to \cite{GMB19}, it seems natural to propose an encoding of a Hoare triple $\{b\}p\{c\}$ in \texttt{PKAT} as $b \cdot p \preccurlyeq b \cdot p \cdot c$. This encoding conveys that the correctness of a program can only grow with execution. To allow for a complete encoding of Hoare logic and capture non-classical examples with some degrees of uncertainty and contradiction in program execution and test evaluation, as in \cite{GMB19}, it may be necessary to refine \texttt{PKAT} with additional properties.

%There is still much to be explored, such as investigating results of denesting two nested \textbf{while} loops.

The study of the languages underlying paraconsistent transition structures is a line of work to be explored soon. In analogy to what is done in classic automata theory, we will consider a notion of ``paraconsistent automata'', by enriching the PLTS structure with a set of accepting states, in order to generate and characterize the algebra of the recognized expressions, say the paraconsistent regular languages. At this level, we expect to establish a Kleene-like Theorem and to frame such languages algebras as a PKAT.

Additionally, it is necessary to conduct a more thorough investigation into the potential applications and limitations that may arise from the flexibility of the adopted approach.

%max 16 páginas.

\bibliographystyle{eptcs}
\bibliography{mybibliography}

\begin{thebibliography}{10}
\providecommand{\bibitemdeclare}[2]{}
\providecommand{\surnamestart}{}
\providecommand{\surnameend}{}
\providecommand{\urlprefix}{Available at }
\providecommand{\url}[1]{\texttt{#1}}
\providecommand{\href}[2]{\texttt{#2}}
\providecommand{\urlalt}[2]{\href{#1}{#2}}
\providecommand{\doi}[1]{doi:\urlalt{https://doi.org/#1}{#1}}
\providecommand{\eprint}[1]{arXiv:\urlalt{https://arxiv.org/abs/#1}{#1}}
\providecommand{\bibinfo}[2]{#2}

\bibitemdeclare{inproceedings}{emmyiii}
\bibitem{emmyiii}
\bibinfo{author}{Jair~Minoro \surnamestart Abe\surnameend},
  \bibinfo{author}{Cl{\'{a}}udio~Rodrigo \surnamestart Torres\surnameend},
  \bibinfo{author}{Germano \surnamestart Lambert{-}Torres\surnameend},
  \bibinfo{author}{Jo{\~{a}}o~In{\'{a}}cio \surnamestart
  da~Silva~Filho\surnameend} \& \bibinfo{author}{Helga~Gonzaga \surnamestart
  Martins\surnameend} (\bibinfo{year}{2007}):
  \emph{\bibinfo{title}{Paraconsistent Autonomous Mobile Robot Emmy {III}}}.
\newblock In \bibinfo{editor}{Germano \surnamestart
  Lambert{-}Torres\surnameend}, \bibinfo{editor}{Jair~Minoro \surnamestart
  Abe\surnameend}, \bibinfo{editor}{Jo{\~{a}}o~In{\'{a}}cio \surnamestart
  da~Silva~Filho\surnameend} \& \bibinfo{editor}{Helga~Gonzaga \surnamestart
  Martins\surnameend}, editors: {\slshape \bibinfo{booktitle}{Advances in
  Technological Applications of Logical and Intelligent Systems, Selected
  Papers from the Sixth Congress on Logic Applied to Technology, {LAPTEC} 2007,
  Unisanta, Santa Cecilia University, Santos, Brazil, November 21-23, 2007}},
  {\slshape \bibinfo{series}{Frontiers in Artificial Intelligence and
  Applications}} \bibinfo{volume}{186}, \bibinfo{publisher}{{IOS} Press}, pp.
  \bibinfo{pages}{236--258}, \doi{10.3233/978-1-58603-936-3-236}.

\bibitemdeclare{article}{AC10}
\bibitem{AC10}
\bibinfo{author}{Juan~Carlos \surnamestart Agudelo\surnameend} \&
  \bibinfo{author}{Walter~Alexandre \surnamestart Carnielli\surnameend}
  (\bibinfo{year}{2010}): \emph{\bibinfo{title}{Paraconsistent Machines and
  their Relation to Quantum Computing}}.
\newblock {\slshape \bibinfo{journal}{J. Log. Comput.}}
  \bibinfo{volume}{20}(\bibinfo{number}{2}), pp. \bibinfo{pages}{573--595}.
\newblock \urlprefix\url{https://doi.org/10.48550/arXiv.0802.0150}.

\bibitemdeclare{book}{Aka16}
\bibitem{Aka16}
\bibinfo{editor}{Seiki \surnamestart Akama\surnameend}, editor
  (\bibinfo{year}{2016}): \emph{\bibinfo{title}{Towards Paraconsistent
  Engineering}}.
\newblock {\slshape \bibinfo{series}{Intelligent Systems Reference Library}}
  \bibinfo{volume}{110}, \bibinfo{publisher}{Springer},
  \doi{10.1007/978-3-319-40418-9}.

\bibitemdeclare{techreport}{backhouse}
\bibitem{backhouse}
\bibinfo{author}{Roland~C. \surnamestart Backhouse\surnameend},
  \bibinfo{author}{Dexter \surnamestart Kozen\surnameend} \&
  \bibinfo{author}{Bernhard \surnamestart M\"{o}ller\surnameend}
  (\bibinfo{year}{2002}): \emph{\bibinfo{title}{{Applications of Kleene Algebra
  (Dagstuhl Seminar 01081)}}}.
\newblock \bibinfo{type}{Dagstuhl Seminar Report} \bibinfo{number}{298},
  \bibinfo{institution}{Schloss Dagstuhl -- Leibniz-Zentrum f{\"u}r
  Informatik}, \bibinfo{address}{Dagstuhl, Germany},
  \doi{10.4230/DagSemRep.298}.

\bibitemdeclare{inproceedings}{BarbosaM23}
\bibitem{BarbosaM23}
\bibinfo{author}{Lu{\'{\i}}s~Soares \surnamestart Barbosa\surnameend} \&
  \bibinfo{author}{Alexandre \surnamestart Madeira\surnameend}
  (\bibinfo{year}{2023}): \emph{\bibinfo{title}{Capturing Qubit Decoherence
  through Paraconsistent Transition Systems}}.
\newblock In \bibinfo{editor}{Shigeru \surnamestart Chiba\surnameend},
  \bibinfo{editor}{Youyou \surnamestart Cong\surnameend} \&
  \bibinfo{editor}{Elisa~Gonzalez \surnamestart Boix\surnameend}, editors:
  {\slshape \bibinfo{booktitle}{Companion Proceedings of the 7th International
  Conference on the Art, Science, and Engineering of Programming, Programming
  2023, Tokyo, Japan, March 13-17, 2023}}, \bibinfo{publisher}{{ACM}}, pp.
  \bibinfo{pages}{109--110}, \doi{10.1145/3594671.3594689}.

\bibitemdeclare{article}{BJ05}
\bibitem{BJ05}
\bibinfo{author}{Nick \surnamestart Bezhanishvili\surnameend} \&
  \bibinfo{author}{Dick \surnamestart de~Jongh\surnameend}
  (\bibinfo{year}{2005}): \emph{\bibinfo{title}{Intuitionistic Logic}}.
\newblock {\slshape \bibinfo{journal}{ESSLLI course notes}}.
\newblock
  \urlprefix\url{https://www.math.uni-hamburg.de/en/personen/khomskii/intuitionistic/PP-2006-25.text.pdf}.

\bibitemdeclare{inbook}{Borceux94}
\bibitem{Borceux94}
\bibinfo{author}{Francis \surnamestart Borceux\surnameend}
  (\bibinfo{year}{1994}): \emph{\bibinfo{title}{Locales}}, p.
  \bibinfo{pages}{1–86}.
\newblock \bibinfo{series}{Encyclopedia of Mathematics and its Applications},
  \bibinfo{publisher}{Cambridge University Press}.
\newblock
  \urlprefix\url{https://www.cambridge.org/core/books/handbook-of-categorical-algebra/A0B8285BBA900AFE85EED8C971E0DE14}.

\bibitemdeclare{article}{Bou}
\bibitem{Bou}
\bibinfo{author}{F{\'e}lix \surnamestart Bou\surnameend},
  \bibinfo{author}{Francesc \surnamestart Esteva\surnameend},
  \bibinfo{author}{Llu{\'\i}s \surnamestart Godo\surnameend} \&
  \bibinfo{author}{Ricardo~Oscar \surnamestart Rodr{\'\i}guez\surnameend}
  (\bibinfo{year}{2009}): \emph{\bibinfo{title}{{On the Minimum Many-Valued
  Modal Logic over a Finite Residuated Lattice}}}.
\newblock {\slshape \bibinfo{journal}{Journal of Logic and Computation}}
  \bibinfo{volume}{21}(\bibinfo{number}{5}), pp. \bibinfo{pages}{739--790},
  \doi{10.1093/logcom/exp062}.

\bibitemdeclare{article}{BusanicheCR14}
\bibitem{BusanicheCR14}
\bibinfo{author}{Manuela \surnamestart Busaniche\surnameend} \&
  \bibinfo{author}{Roberto \surnamestart Cignoli\surnameend}
  (\bibinfo{year}{2014}): \emph{\bibinfo{title}{The subvariety of commutative
  residuated lattices represented by twist-products}}.
\newblock {\slshape \bibinfo{journal}{algebra universalis}}
  \bibinfo{volume}{71}, \doi{10.1007/s00012-014-0265-4}.

\bibitemdeclare{article}{BusanicheGM22}
\bibitem{BusanicheGM22}
\bibinfo{author}{Manuela \surnamestart Busaniche\surnameend},
  \bibinfo{author}{Nikolaos \surnamestart Galatos\surnameend} \&
  \bibinfo{author}{Miguel~Andr{\'{e}}s \surnamestart Marcos\surnameend}
  (\bibinfo{year}{2022}): \emph{\bibinfo{title}{Twist Structures and Nelson
  Conuclei}}.
\newblock {\slshape \bibinfo{journal}{Stud Logica}}
  \bibinfo{volume}{110}(\bibinfo{number}{4}), pp. \bibinfo{pages}{949--987},
  \doi{10.1007/S11225-022-09988-Z}.

\bibitemdeclare{article}{Castiglioni08}
\bibitem{Castiglioni08}
\bibinfo{author}{Jose \surnamestart Castiglioni\surnameend},
  \bibinfo{author}{M.~\surnamestart Menni\surnameend} \& \bibinfo{author}{Marta
  \surnamestart Sagastume\surnameend} (\bibinfo{year}{2008}):
  \emph{\bibinfo{title}{On Some Categories of Involutive Centered Residuated
  Lattices}}.
\newblock {\slshape \bibinfo{journal}{Studia Logica}} \bibinfo{volume}{90}, pp.
  \bibinfo{pages}{93--124}, \doi{10.1007/s11225-008-9145-2}.

\bibitemdeclare{article}{CG00}
\bibitem{CG00}
\bibinfo{author}{Maria Luisa~Dalla \surnamestart Chiara\surnameend} \&
  \bibinfo{author}{Roberto \surnamestart Giuntini\surnameend}
  (\bibinfo{year}{2000}): \emph{\bibinfo{title}{Paraconsistent ideas in quantum
  logic}}.
\newblock {\slshape \bibinfo{journal}{Synth.}}
  \bibinfo{volume}{125}(\bibinfo{number}{1-2}), pp. \bibinfo{pages}{55--68}.
\newblock \urlprefix\url{https://doi.org/10.1023/A:1005296018904}.

\bibitemdeclare{book}{Con71}
\bibitem{Con71}
\bibinfo{author}{J.H. \surnamestart Conway\surnameend} (\bibinfo{year}{1971}):
  \emph{\bibinfo{title}{Regular Algebra and Finite Machines}}.
\newblock \bibinfo{series}{Chapman and Hall mathematics series},
  \bibinfo{publisher}{Chapman and Hall}.
\newblock \urlprefix\url{https://books.google.pt/books?id=xBXvAAAAMAAJ}.

\bibitemdeclare{article}{Costa1986}
\bibitem{Costa1986}
\bibinfo{author}{Newton C.~A. \surnamestart Costa\surnameend} \&
  \bibinfo{author}{Walter~A. \surnamestart Carnielli\surnameend}
  (\bibinfo{year}{1986}): \emph{\bibinfo{title}{On Paraconsistent Deontic
  Logic}}.
\newblock {\slshape \bibinfo{journal}{Philosophia}}
  \bibinfo{volume}{16}(\bibinfo{number}{3-4}), pp. \bibinfo{pages}{293--305},
  \doi{10.1007/bf02379748}.

\bibitemdeclare{article}{DaCosta1977}
\bibitem{DaCosta1977}
\bibinfo{author}{Newton C. A.~Da \surnamestart Costa\surnameend} \&
  \bibinfo{author}{E.~H. \surnamestart Alves\surnameend}
  (\bibinfo{year}{1977}): \emph{\bibinfo{title}{A Semantical Analysis of the
  Calculi C N}}.
\newblock {\slshape \bibinfo{journal}{Notre Dame Journal of Formal Logic}}
  \bibinfo{volume}{18}(\bibinfo{number}{4}), pp. \bibinfo{pages}{621--630},
  \doi{10.1305/ndjfl/1093888132}.

\bibitemdeclare{inproceedings}{NCL22}
\bibitem{NCL22}
\bibinfo{author}{Ana \surnamestart Cruz\surnameend}, \bibinfo{author}{Alexandre
  \surnamestart Madeira\surnameend} \& \bibinfo{author}{Lu{\'{\i}}s~Soares
  \surnamestart Barbosa\surnameend} (\bibinfo{year}{2022}):
  \emph{\bibinfo{title}{A Logic for Paraconsistent Transition Systems}}.
\newblock In \bibinfo{editor}{Andrzej \surnamestart Indrzejczak\surnameend} \&
  \bibinfo{editor}{Michal \surnamestart Zawidzki\surnameend}, editors:
  {\slshape \bibinfo{booktitle}{10th International Conference on Non-Classical
  Logics. Theory and Applications}}, {\slshape \bibinfo{series}{{EPTCS}}}
  \bibinfo{volume}{358}, pp. \bibinfo{pages}{270--284}.
\newblock \urlprefix\url{https://doi.org/10.4204/EPTCS.358.20}.

\bibitemdeclare{inproceedings}{lsfa22}
\bibitem{lsfa22}
\bibinfo{author}{Ana \surnamestart Cruz\surnameend}, \bibinfo{author}{Alexandre
  \surnamestart Madeira\surnameend} \& \bibinfo{author}{Lu{\'{\i}}s~Soares
  \surnamestart Barbosa\surnameend} (\bibinfo{year}{2022}):
  \emph{\bibinfo{title}{Paraconsistent Transition Systems}}.
\newblock In \bibinfo{editor}{Daniele \surnamestart
  Nantes{-}Sobrinho\surnameend} \& \bibinfo{editor}{Pascal \surnamestart
  Fontaine\surnameend}, editors: {\slshape \bibinfo{booktitle}{Proceedings 17th
  International Workshop on Logical and Semantic Frameworks with Applications,
  {LSFA} 2022, Belo Horizonte, Brazil (hybrid), 23-24 September 2022}},
  {\slshape \bibinfo{series}{{EPTCS}}} \bibinfo{volume}{376}, pp.
  \bibinfo{pages}{3--15}.
\newblock \urlprefix\url{https://doi.org/10.4204/EPTCS.376.3}.

\bibitemdeclare{article}{Bisparaconsistent}
\bibitem{Bisparaconsistent}
\bibinfo{author}{Juliana \surnamestart Cunha\surnameend},
  \bibinfo{author}{Alexandre \surnamestart Madeira\surnameend} \&
  \bibinfo{author}{Luis~S. \surnamestart Barbosa\surnameend}
  (\bibinfo{year}{Available
  \href{https://sweet.ua.pt/madeira/main_files/PTSCJournal24.pdf}{here}}):
  \emph{\bibinfo{title}{Paraconsistent transition structures: compositional
  principles and a modal logic}}.
\newblock {\slshape \bibinfo{journal}{(submitted to a journal)}}.

\bibitemdeclare{inproceedings}{TASE23}
\bibitem{TASE23}
\bibinfo{author}{Juliana \surnamestart Cunha\surnameend},
  \bibinfo{author}{Alexandre \surnamestart Madeira\surnameend} \&
  \bibinfo{author}{Lu{\'{\i}}s~Soares \surnamestart Barbosa\surnameend}
  (\bibinfo{year}{2023}): \emph{\bibinfo{title}{Stepwise Development of
  Paraconsistent Processes}}.
\newblock In \bibinfo{editor}{Cristina \surnamestart David\surnameend} \&
  \bibinfo{editor}{Meng \surnamestart Sun\surnameend}, editors: {\slshape
  \bibinfo{booktitle}{Theoretical Aspects of Software Engineering - 17th
  International Symposium, {TASE} 2023, Bristol, UK, July 4-6, 2023,
  Proceedings}}, {\slshape \bibinfo{series}{Lecture Notes in Computer Science}}
  \bibinfo{volume}{13931}, \bibinfo{publisher}{Springer}, pp.
  \bibinfo{pages}{327--343}.
\newblock \urlprefix\url{https://doi.org/10.1007/978-3-031-35257-7\_20}.

\bibitemdeclare{inproceedings}{FSEN23}
\bibitem{FSEN23}
\bibinfo{author}{Juliana \surnamestart Cunha\surnameend},
  \bibinfo{author}{Alexandre \surnamestart Madeira\surnameend} \&
  \bibinfo{author}{Lu{\'{\i}}s~Soares \surnamestart Barbosa\surnameend}
  (\bibinfo{year}{2023}): \emph{\bibinfo{title}{Structured Specification of
  Paraconsistent Transition Systems}}.
\newblock In \bibinfo{editor}{Hossein \surnamestart Hojjat\surnameend} \&
  \bibinfo{editor}{Erika \surnamestart {\'{A}}brah{\'{a}}m\surnameend},
  editors: {\slshape \bibinfo{booktitle}{Fundamentals of Software Engineering -
  10th International Conference, {FSEN} 2023, Tehran, Iran, May 4-5, 2023,
  Revised Selected Papers}}, {\slshape \bibinfo{series}{Lecture Notes in
  Computer Science}} \bibinfo{volume}{14155}, \bibinfo{publisher}{Springer},
  pp. \bibinfo{pages}{1--17}.
\newblock \urlprefix\url{https://doi.org/10.1007/978-3-031-42441-0\_1}.

\bibitemdeclare{article}{Cortes22}
\bibitem{Cortes22}
\bibinfo{author}{Hyghor~Miranda \surnamestart Côrtes\surnameend},
  \bibinfo{author}{Paulo~Eduardo \surnamestart Santos\surnameend} \&
  \bibinfo{author}{João~Inácio \surnamestart {da Silva Filho}\surnameend}
  (\bibinfo{year}{2022}): \emph{\bibinfo{title}{Monitoring electrical systems
  data-network equipment by means of Fuzzy and Paraconsistent Annotated
  Logic}}.
\newblock {\slshape \bibinfo{journal}{Expert Systems with Applications}}
  \bibinfo{volume}{187}, p. \bibinfo{pages}{115865}.
\newblock \urlprefix\url{https://doi.org/10.1016/j.eswa.2021.115865}.

\bibitemdeclare{incollection}{Dunn2002}
\bibitem{Dunn2002}
\bibinfo{author}{Michael \surnamestart Dunn\surnameend} \&
  \bibinfo{author}{Greg \surnamestart Restall\surnameend}
  (\bibinfo{year}{2002}): \emph{\bibinfo{title}{Relevance Logic}}.
\newblock In \bibinfo{editor}{D.~\surnamestart Gabbay\surnameend} \&
  \bibinfo{editor}{F.~\surnamestart Guenthner\surnameend}, editors: {\slshape
  \bibinfo{booktitle}{Handbook of Philosophical Logic}},
  \bibinfo{publisher}{Kluwer Academic Publishers}, pp. \bibinfo{pages}{1--128}.
\newblock \urlprefix\url{https://doi.org/10.1007/978-94-017-0452-6}.

\bibitemdeclare{article}{GMB19}
\bibitem{GMB19}
\bibinfo{author}{Leandro \surnamestart Gomes\surnameend},
  \bibinfo{author}{Alexandre \surnamestart Madeira\surnameend} \&
  \bibinfo{author}{Lu{\'{\i}}s~Soares \surnamestart Barbosa\surnameend}
  (\bibinfo{year}{2019}): \emph{\bibinfo{title}{Generalising {KAT} to Verify
  Weighted Computations}}.
\newblock {\slshape \bibinfo{journal}{Sci. Ann. Comput. Sci.}}
  \bibinfo{volume}{29}(\bibinfo{number}{2}), pp. \bibinfo{pages}{141--184},
  \doi{10.7561/SACS.2019.2.141}.

\bibitemdeclare{book}{10.5555/557365}
\bibitem{10.5555/557365}
\bibinfo{author}{David \surnamestart Harel\surnameend}, \bibinfo{author}{Jerzy
  \surnamestart Tiuryn\surnameend} \& \bibinfo{author}{Dexter \surnamestart
  Kozen\surnameend} (\bibinfo{year}{2000}): \emph{\bibinfo{title}{Dynamic
  Logic}}.
\newblock \bibinfo{publisher}{MIT Press}, \bibinfo{address}{Cambridge, MA,
  USA}, \doi{10.7551/mitpress/2516.001.0001}.

\bibitemdeclare{article}{Kalman1958}
\bibitem{Kalman1958}
\bibinfo{author}{John~A. \surnamestart Kalman\surnameend}
  (\bibinfo{year}{1958}): \emph{\bibinfo{title}{Lattices with involution}}.
\newblock {\slshape \bibinfo{journal}{Transactions of the American Mathematical
  Society}} \bibinfo{volume}{87}, pp. \bibinfo{pages}{485--491},
  \doi{10.1090/S0002-9947-1958-0095135-X}.
\newblock \urlprefix\url{https://api.semanticscholar.org/CorpusID:53394259}.

\bibitemdeclare{inbook}{Kle56}
\bibitem{Kle56}
\bibinfo{author}{S.~C. \surnamestart Kleene\surnameend} (\bibinfo{year}{1956}):
  \emph{\bibinfo{title}{Representation of Events in Nerve Nets and Finite
  Automata}}, pp. \bibinfo{pages}{3--42}.
\newblock \bibinfo{publisher}{Princeton University Press},
  \bibinfo{address}{Princeton}.
\newblock \urlprefix\url{https://doi.org/10.1515/9781400882618-002}.

\bibitemdeclare{book}{kozen1992design}
\bibitem{kozen1992design}
\bibinfo{author}{D.~\surnamestart Kozen\surnameend} (\bibinfo{year}{1992}):
  \emph{\bibinfo{title}{The Design and Analysis of Algorithms}}.
\newblock \bibinfo{series}{Monographs in Computer Science},
  \bibinfo{publisher}{Springer New York, NY}.
\newblock \urlprefix\url{https://doi.org/10.1007/978-1-4612-4400-4}.

\bibitemdeclare{article}{Kozen94}
\bibitem{Kozen94}
\bibinfo{author}{D.~\surnamestart Kozen\surnameend} (\bibinfo{year}{1994}):
  \emph{\bibinfo{title}{A Completeness Theorem for Kleene Algebras and the
  Algebra of Regular Events}}.
\newblock {\slshape \bibinfo{journal}{Information and Computation}}
  \bibinfo{volume}{110}(\bibinfo{number}{2}), pp. \bibinfo{pages}{366--390},
  \doi{10.1006/inco.1994.1037}.
\newblock
  \urlprefix\url{https://www.sciencedirect.com/science/article/pii/S0890540184710376}.

\bibitemdeclare{article}{Kozen97}
\bibitem{Kozen97}
\bibinfo{author}{Dexter \surnamestart Kozen\surnameend} (\bibinfo{year}{1997}):
  \emph{\bibinfo{title}{Kleene Algebra with Tests}}.
\newblock {\slshape \bibinfo{journal}{{ACM} Trans. Program. Lang. Syst.}}
  \bibinfo{volume}{19}(\bibinfo{number}{3}), pp. \bibinfo{pages}{427--443},
  \doi{10.1145/256167.256195}.

\bibitemdeclare{inproceedings}{KP2000}
\bibitem{KP2000}
\bibinfo{author}{Dexter \surnamestart Kozen\surnameend} \&
  \bibinfo{author}{Maria-Cristina \surnamestart Patron\surnameend}
  (\bibinfo{year}{2000}): \emph{\bibinfo{title}{Certification of Compiler
  Optimizations Using Kleene Algebra with Tests}}.
\newblock In \bibinfo{editor}{John \surnamestart Lloyd\surnameend},
  \bibinfo{editor}{Veronica \surnamestart Dahl\surnameend},
  \bibinfo{editor}{Ulrich \surnamestart Furbach\surnameend},
  \bibinfo{editor}{Manfred \surnamestart Kerber\surnameend},
  \bibinfo{editor}{Kung-Kiu \surnamestart Lau\surnameend},
  \bibinfo{editor}{Catuscia \surnamestart Palamidessi\surnameend},
  \bibinfo{editor}{Lu{\'i}s~Moniz \surnamestart Pereira\surnameend},
  \bibinfo{editor}{Yehoshua \surnamestart Sagiv\surnameend} \&
  \bibinfo{editor}{Peter~J. \surnamestart Stuckey\surnameend}, editors:
  {\slshape \bibinfo{booktitle}{Computational Logic --- CL 2000}},
  \bibinfo{publisher}{Springer Berlin Heidelberg}, \bibinfo{address}{Berlin,
  Heidelberg}, pp. \bibinfo{pages}{568--582}.
\newblock \urlprefix\url{https://doi.org/10.1007/3-540-44957-4_38}.

\bibitemdeclare{article}{Kracht1998-KRAOEO}
\bibitem{Kracht1998-KRAOEO}
\bibinfo{author}{Marcus \surnamestart Kracht\surnameend}
  (\bibinfo{year}{1998}): \emph{\bibinfo{title}{On Extensions of Intermediate
  Logics by Strong Negation}}.
\newblock {\slshape \bibinfo{journal}{Journal of Philosophical Logic}}
  \bibinfo{volume}{27}(\bibinfo{number}{1}), pp. \bibinfo{pages}{49--73},
  \doi{10.1023/A:1004222213212}.

\bibitemdeclare{book}{SalomaaKuich}
\bibitem{SalomaaKuich}
\bibinfo{author}{W.~\surnamestart Kuich\surnameend} \&
  \bibinfo{author}{A.~\surnamestart Salomaa\surnameend} (\bibinfo{year}{1986}):
  \emph{\bibinfo{title}{Semirings, Automata, Languages}}.
\newblock \bibinfo{series}{EATCS monographs on theoretical computer science},
  \bibinfo{publisher}{Springer-Verlag}.
\newblock \urlprefix\url{https://doi.org/10.1007/978-3-642-69959-7}.

\bibitemdeclare{inproceedings}{Pratt1990}
\bibitem{Pratt1990}
\bibinfo{author}{Vaughan \surnamestart Pratt\surnameend}
  (\bibinfo{year}{1990}): \emph{\bibinfo{title}{Dynamic algebras as a
  well-behaved fragment of relation algebras}}.
\newblock In \bibinfo{editor}{Clifford~H. \surnamestart Bergman\surnameend},
  \bibinfo{editor}{Roger~D. \surnamestart Maddux\surnameend} \&
  \bibinfo{editor}{Don~L. \surnamestart Pigozzi\surnameend}, editors: {\slshape
  \bibinfo{booktitle}{Algebraic Logic and Universal Algebra in Computer
  Science}}, \bibinfo{publisher}{Springer New York}, \bibinfo{address}{New
  York, NY}, pp. \bibinfo{pages}{77--110}.
\newblock \urlprefix\url{https://doi.org/10.1007/BFb0043079}.

\bibitemdeclare{article}{Preskill18}
\bibitem{Preskill18}
\bibinfo{author}{John \surnamestart Preskill\surnameend}
  (\bibinfo{year}{2018}): \emph{\bibinfo{title}{Quantum Computing in the NISQ
  era and beyond}}.
\newblock {\slshape \bibinfo{journal}{Quantum}} \bibinfo{volume}{2},
  p.~\bibinfo{pages}{79}, \doi{10.22331/q-2018-08-06-79}.

\bibitemdeclare{incollection}{Priest07}
\bibitem{Priest07}
\bibinfo{author}{Graham \surnamestart Priest\surnameend}
  (\bibinfo{year}{2007}): \emph{\bibinfo{title}{Paraconsistency and
  dialetheism}}.
\newblock In \bibinfo{editor}{Dov~M. \surnamestart Gabbay\surnameend} \&
  \bibinfo{editor}{John \surnamestart Woods\surnameend}, editors: {\slshape
  \bibinfo{booktitle}{The Many Valued and Nonmonotonic Turn in Logic}},
  {\slshape \bibinfo{series}{Handbook of the History of
  Logic}}~\bibinfo{volume}{8}, \bibinfo{publisher}{Elsevier}, pp.
  \bibinfo{pages}{129--204}.
\newblock \urlprefix\url{https://doi.org/10.1016/S1874-5857(07)80006-9}.

\bibitemdeclare{book}{Rasiowa63}
\bibitem{Rasiowa63}
\bibinfo{author}{H.~\surnamestart Rasiowa\surnameend} \&
  \bibinfo{author}{R.~\surnamestart Sikorski\surnameend}
  (\bibinfo{year}{1970}): \emph{\bibinfo{title}{The Mathematics of
  Metamathematics}}.
\newblock \bibinfo{series}{Monografie matematyczne},
  \bibinfo{publisher}{PWN-Polish Scientific Publishers}.
\newblock \urlprefix\url{https://books.google.pt/books?id=vtRGtQEACAAJ}.

\bibitemdeclare{article}{Red64}
\bibitem{Red64}
\bibinfo{author}{Valentin~N \surnamestart Redko\surnameend}
  (\bibinfo{year}{1964}): \emph{\bibinfo{title}{On defining relations for the
  algebra of regular events}}.
\newblock {\slshape \bibinfo{journal}{Ukrainskii Matematicheskii Zhurnal}}
  \bibinfo{volume}{16}(\bibinfo{number}{1}).

\bibitemdeclare{article}{Sal96}
\bibitem{Sal96}
\bibinfo{author}{Arto \surnamestart Salomaa\surnameend} (\bibinfo{year}{1966}):
  \emph{\bibinfo{title}{Two Complete Axiom Systems for the Algebra of Regular
  Events}}.
\newblock {\slshape \bibinfo{journal}{J. {ACM}}}
  \bibinfo{volume}{13}(\bibinfo{number}{1}), pp. \bibinfo{pages}{158--169},
  \doi{10.1145/321312.321326}.

\bibitemdeclare{article}{Tarski41}
\bibitem{Tarski41}
\bibinfo{author}{Alfred \surnamestart Tarski\surnameend}
  (\bibinfo{year}{1941}): \emph{\bibinfo{title}{On the Calculus of Relations}}.
\newblock {\slshape \bibinfo{journal}{The Journal of Symbolic Logic}}
  \bibinfo{volume}{6}(\bibinfo{number}{3}), pp. \bibinfo{pages}{73--89},
  \doi{10.2307/2268577}.
\newblock \urlprefix\url{http://www.jstor.org/stable/2268577}.

\bibitemdeclare{article}{Tsinakis2006}
\bibitem{Tsinakis2006}
\bibinfo{author}{Constantine \surnamestart Tsinakis\surnameend} \&
  \bibinfo{author}{Annika~M. \surnamestart Wille\surnameend}
  (\bibinfo{year}{2006}): \emph{\bibinfo{title}{Minimal Varieties of Involutive
  Residuated Lattices}}.
\newblock {\slshape \bibinfo{journal}{Studia Logica}}
  \bibinfo{volume}{83}(\bibinfo{number}{1}), pp. \bibinfo{pages}{407--423},
  \doi{10.1007/s11225-006-8311-7}.

\end{thebibliography}
\end{document}